\definecolor{myurlcolor}{rgb}{0,0,0.9}
\newcommand{\proj}[1]{| #1\rangle\!\langle #1 |}
\DeclareMathOperator{\trace}{Tr}
\newcommand{\Ptr}[2]{\trace_{#1}\Pa{#2}}
\newcommand{\Tr}[1]{\Ptr{}{#1}}
\newcommand{\Pa}[1]{\left[#1\right]}
\newcommand{\norm}[1]{\left\lVert #1 \right\rVert}
\DeclareMathOperator{\Var}{Var}
\DeclareMathOperator{\Cost}{Cost}
\DeclareMathOperator{\ad}{ad}
\def\Cat{\mathrm{Cat}}
\def\AC{\mathrm{AC}}
\def\CNOT{\mathrm{CNOT}}
\theoremstyle{plain}
\newtheorem{thm}{Theorem}
\newtheorem{lem}[thm]{Lemma}
\newtheorem{prop}[thm]{Proposition}
\newtheorem{cor}[thm]{Corollary}
\newtheorem{Def}[thm]{Definition}
\newtheorem{claim}[thm]{Claim}
 \newtheorem{example}{Example}
\newcommand*{\myproofname}{Proof}
\def\ot{\otimes}
\def\complex{\mathbb{C}}
\DeclareMathAlphabet{\mathcal}{OMS}{cmsy}{m}{n}
\begin{document}

\author{Lu Li}
\email{lilu93@zju.edu.cn}
\affiliation{Department of Mathematics, Zhejiang Sci-Tech University, Hangzhou, Zhejiang 310018, China}

\author{Kaifeng Bu}
\email{kfbu@fas.harvard.edu}
\affiliation{Department of Physics, Harvard University, Cambridge, Massachusetts 02138, USA}

\author{Dax Enshan Koh}
\email{dax\textunderscore koh@ihpc.a-star.edu.sg}
\affiliation{Institute of High Performance Computing, Agency for Science, Technology and Research (A*STAR), 1 Fusionopolis Way, \#16-16 Connexis, Singapore 138632, Singapore}

\author{Arthur Jaffe}
\email{arthur_jaffe@harvard.edu}
\affiliation{Department of Physics, Harvard University, Cambridge, Massachusetts 02138, USA}

\author{Seth Lloyd}
\email{slloyd@mit.edu}
\affiliation{Department of Mechanical Engineering, Massachusetts Institute of Technology, Cambridge, Massachusetts 02139, USA
}

\title{ Wasserstein Complexity of Quantum Circuits}

\begin{abstract}

Given a unitary transformation, what is the size of the smallest quantum circuit that implements it? This quantity, known as the quantum circuit complexity, is a fundamental property of quantum evolutions that has widespread applications in many fields, including quantum computation, quantum field theory, and black hole physics. In this letter, we obtain a new lower bound for the quantum circuit complexity in terms of a novel complexity measure that we propose for quantum circuits, which we call the quantum Wasserstein complexity. Our proposed measure is based on the quantum Wasserstein distance of order one (also called the quantum earth mover's distance), a metric on the space of quantum states. We also prove several fundamental and important properties of our new complexity measure, which stand to be of independent interest. Finally, we show that our new measure also provides a lower bound for the experimental cost of implementing quantum circuits, which implies a quantum limit on converting quantum resources to computational resources. Our results provide novel applications of the quantum Wasserstein distance and pave the way for a deeper understanding of the resources needed to implement a quantum computation.

\end{abstract}

\maketitle

\section{Introduction}

How many elementary quantum gates does it take to synthesize a desired unitary transformation $U$? Computing this number---called the quantum circuit complexity of $U$---is one of the holy grails of quantum computation, especially noisy intermediate-scale quantum computation where quantum resources are scarce~\cite{preskill2018quantum,bharti2022noisy}. By using fewer elementary gates, a quantum algorithm stands to take less time to produce its output; this in turn increases its prospect for achieving the highly sought goal of quantum-computational advantage~\cite{harrow2017quantum,dalzell2020how}.

In the search for optimal quantum circuits that use as few elementary gates as possible, one has taken various approaches to quantum circuit synthesis, including: reinforcement learning~\cite{fosel2021quantum}, quantum Karnaugh maps~\cite{bae2020quantum}, and ZX-calculus~\cite{duncan2020graph}. While these methods usually succeed at finding smaller circuits, they seldom achieve optimal circuit solutions~\footnote{Alternatively, one could aim to minimize the number of a particular type of gate in the circuit. For this goal, algorithms that achieve optimality---called resource-optimal circuit synthesis or re-synthesis algorithms \cite{mosca2021polynomial}---include those that minimize the T-count \cite{gosset2014lagrangian,amy2014polynomial,amy2019tcount,gheorghiu2021t, mosca2021polynomial} or CNOT-count \cite{gheorghiu2020reducing} of Clifford+T circuits. 
}. Computing the quantum circuit complexity of unitaries is generally hard, and no efficient algorithms are known~\cite{chia2021quantum,botea2018complexity}.

Unexpected connections  have been found between the notion of quantum circuit complexity and high energy physics. A solution to the wormhole-growth paradox has been proposed, which asks: in the anti-de Sitter/conformal field theory (AdS/CFT) correspondence, which quantity in the CFT is dual to the wormhole volume~\cite{susskind2016computational}? Stanford and Susskind conjecture that the answer is the quantum circuit complexity of the boundary state~\cite{stanford2014complexity}. In follow-up work by Brown and Susskind~\cite{brown2018second}, they conjecture that the quantum circuit complexity grows linearly for an exponentially long time. This was subsequently formalized and proved by Haferkamp et al.~\cite{haferkamp2022linear}. (See also \cite{li2022short} for two short proofs of this result.)

Nielsen pioneered a geometric approach to find upper and lower bounds for the quantum circuit complexity~\cite{nielsen2006geometric}. In a seminal series of papers, Nielsen and collaborators developed various geometric notions related to quantum computation and proved that the quantum circuit complexity of a unitary, up to polynomial factors and technical caveats, is equal to the circuit cost, defined as the length of the shortest path between two points in some curved Riemannian geometry~\cite{nielsen2006geometric, nielsen2006quantum, nielsen2006optimal,dowling2008geometry}. 

A recent fruitful approach to bound Nielsen's circuit cost is to relate it to the quantum resources in a quantum circuit that refer to ingredients in the circuit responsible for a quantum speedup. A famous example of a quantum resource is nonstabilizerness, also known as magic. The Gottesman-Knill theorem proves that stabilizer circuits are efficiently classically simulable~\cite{gottesman1998heisenberg}, but when supplemented with magic states, such circuits become hard to simulate~\cite{jozsa2014classical, koh2017further, bouland2018complexity, yoganathan2019quantum}. Hence, magic could be thought of as a resource for a quantum speedup~\cite{bravyi2016trading,bravyi2019simulation,howard2017application,seddon2021quantifying,seddon2019quantifying,wang2019quantifying,koh2017computing,bu2019efficient,bu2022classical}. In recent works, the circuit cost was shown to be bounded below by the circuit's entangling power~\cite{eisert2021entangling}, magic~\cite{Bucomplexity22}, and sensitivity~\cite{Bucomplexity22}.

In this letter, we study the circuit complexity of quantum circuits using the so-called \textit{Wasserstein distance}. Classically, the Wasserstein distance is a metric on the space of probability distributions on a metric space that traces its origins to the works of Kantorovich~\cite{kantorovich1960mathematical}, Vaserstein~\cite{vaserstein1969markov}, and others. 
The classical Wasserstein distance has numerous applications, including optimal transport~\cite{kantorovich1960mathematical,villani2009optimal,peyre2019computational}, image retrieval in computer vision~\cite{rubner2000earth}, and Wasserstein generative adversarial networks (WGANs) in classical machine learning~\cite{arjovsky17a,GoodfellowNIPS2014,GulrajaniNIPS2017}.

What is an appropriate quantum generalization of the classical Wasserstein distance that is useful for quantum computation?
To answer this question, several quantum generalizations of the Wasserstein distance have been proposed. These include the quantum $W_2$ distances, pioneered by
Carlen, Maas, Datta, Rouz\'e, Junge, and others,  based on a Riemannian metric on the manifold of quantum states \cite{Carlen2014analog,carlen2017gradient,carlen2020non,rouze2019concentration,datta2020relating,Junge2020,VanPRL2021} that generalize the classical Wasserstein distance of order 2. These $W_2$ distances have been shown to be related to the entropy and Fisher information, two important concepts in quantum information theory~\cite{datta2020relating}.
In addition, several candidates have been proposed that generalize the Wasserstein distance of order 1 \cite{chen2017matricia,ryu2018vector,PalmaIEEE2021}. In this letter, we focus on the quantum Wasserstein distance of order 1, proposed by De Palma, Marvian, Trevisan, and Lloyd~\cite{PalmaIEEE2021}, which can be regarded as a quantum version of the Hamming distance on $n$-qudit systems. This new version of the quantum Wasserstein distance has numerous applications in quantum information and quantum computation, such as quantum machine learning (where it is commonly referred to as the \textit{quantum earth mover's distance}) \cite{Kiani21}, quantum concentration inequalities \cite{palma2022quantum}, quantum differential privacy~\cite{hirche2022quantum}, characterizing limitations of variational quantum algorithms \cite{de2022limitations}, etc. However, little is hitherto known about the connection between the quantum Wasserstein distance and the quantum circuit complexity.

In this work, we introduce a complexity measure for quantum circuits, called the quantum Wasserstein complexity, which we define as the 
maximal distance between the input and output states of the circuit, measured using the quantum Wasserstein distance of order 1.  We show several useful properties of this complexity 
measure for quantum channels, such as subadditivity under concatenation, superadditivity
under tensor product (or additivity for correlation enhanced complexity measures). Most importantly, we show a connection between the quantum Wasserstein complexity and the circuit complexity, where we show that the quantum Wasserstein complexity provides a lower bound for the circuit cost.
Finally, we show that the quantum Wasserstein complexity also provides a lower bound on the 
experimental cost to implement quantum circuits.

\section{Main results}
Given the $n$-qudit system $\mathcal{H}=(\complex^d)^{\ot n}$, let us define $O(\mathcal{H})$ to be the set of traceless, Hermitian operators on $\mathcal{H}$, i.e., 
$O(\mathcal{H})=\set{A \in \mathcal L(\mathcal H):\Tr{A}=0,\ A^\dag=A}$, where $\mathcal L(\mathcal H)$ denotes the set of linear operators on $\mathcal H$. Also, let $D(\mathcal{H})$ denote the 
set of density operators on $\mathcal{H}$. 
The \textit{quantum Wasserstein distance of order 1} \cite{PalmaIEEE2021} between two quantum states $\rho,\sigma\in D(\mathcal{H})$
is defined as:
\begin{eqnarray}
\nonumber\norm{\rho-\sigma}_{W_1}
:=\frac{1}{2} \min_{\set{X_i}}
\left\{\sum_i\norm{X_i}_1: \rho-\sigma=\sum_iX_i\;,\right.\\
\left. \phantom{\sum_i}  X_i\in O(\mathcal{H}),\ \Ptr{i}{X_i}=0 \ \forall i \right\},
\label{eq:def_W1}
\end{eqnarray}
where $\Ptr{i}{\cdot}$ denotes the partial trace over the $i$-th subsystem, and $\norm{X}_1=\Tr{\sqrt{X^\dag X}}$ denotes the trace norm of $X$. This distance is also called the \textit{quantum} $W_1$ \textit{distance} or the \textit{quantum earth mover's distance}.
Using the quantum Wasserstein distance lets one distinguish between two quantum states that differ locally; this is different from their global distinguishability.  
One example to show this distinction comes from considering the $n$-qubit states $\ket{0}^{\ot n}$, $\ket{1}\ket{0}^{\ot n-1}$ and $\ket{1}^{\ot n}$. The trace distances \footnote{Here, the trace distance $d$ is defined as one-half the metric induced by the trace norm, i.e. $d:(x,y)\mapsto \frac{1}{2}\norm{x-y}_1$.} between $\ket{0}^{\ot n}$ and $\ket{1}\ket{0}^{\ot n-1}$ and between $\ket{0}^{\ot n}$ and $\ket{1}^{\ot n}$ are both equal to $1$; hence, using the trace distance, we cannot distinguish which of  $\ket{1}\ket{0}^{\ot n-1}$ and $\ket{1}^{\ot n}$ is further away from $\ket{0}^{\ot n}$.  However, the quantum $W_1$ distance  between $\ket{0}^{\ot n}$ and $\ket{1}\ket{0}^{\ot n-1}$ is equal to $1$, while the quantum $W_1$ 
distance between $\ket{0}^{\ot n}$ and $\ket{1}^{\ot n}$ is equal to $n$; hence in quantum $W_1$ distance, $\ket{1}^{\ot n}$  is further away from $\ket{0}^{\ot n}$ than $\ket{1}\ket{0}^{\ot n-1}$.

Another example comes from considering the trace distance between $\ket{0}^{\ot n}$ and the Cat state $\ket{\Cat_a}=a\ket{0}^{\ot n}+\sqrt{1-|a|^2}\ket{1}^{\ot n}$ with $0<|a|<1$. The trace distance is $\sqrt{1-|a|^2}$, which is very small when $a$ is close to $1$. However, the experimental resource, e.g., the number of gates, 
to transform $\ket{0}^{\ot n}$ to $\ket{\Cat_a}$ is proportional to the size of the system $n$, independent of how close  $a$ is to $1$. On the other hand, the quantum $W_1$ distance between $\ket{0}^{\ot n}$ and $\ket{\Cat_a}$ is $\Omega((1-|a|^2)n)$. 
This motivates us to consider the circuit complexity in terms of the quantum $W_1$ distance.

\begin{Def}
Given an $n$-qudit quantum channel $\Lambda:D(\mathcal{H})\to D(\mathcal{H})$, the \textit{quantum Wasserstein complexity} $C_{W_1}(\Lambda)$ is the maximal distance between the input state and output state in quantum $W_1$ distance, 
\begin{eqnarray}
C_{W_1}(\Lambda)
:=\max_{\rho\in D(\mathcal{H})}
\norm{\rho-\Lambda(\rho)}_{W_1}.
\end{eqnarray}
\end{Def}
By the convexity of the quantum $W_1$ distance, we need only to take the maximization over all pure states, 
\begin{eqnarray}
C_{W_1}(\Lambda)
=\max_{\ket{\psi}\in \mathcal{H}}
\norm{\proj{\psi}-\Lambda(\proj{\psi})}_{W_1}.
\end{eqnarray}
To demonstrate applications of the quantum Wasserstein complexity, let us first study its basic properties.

\begin{prop}\label{prop:bas_1}
The quantum Wasserstein complexity $C_{W_1}(\Lambda) $ satisfies the following properties:

\begin{enumerate}[(1)]
    \item{} \label{Faithfulness} 
Faithfulness: $C_{W_1}(\Lambda)=0$ if and only if $\Lambda$ is the identity map;

\item{} \label{Convexity}
Convexity: $C_{W_1}\left(\sum_ip_i\Lambda_i\right)\leq \sum_ip_iC_{W_1}(\Lambda_i)$, where 
$p_i\geq 0$ and $\sum_ip_i=1$;

\item{} \label{Subadditivity}
Subadditivity under concatenation:  $C_{W_1}(\Lambda_1\circ \Lambda_2)\leq C_{W_1}(\Lambda_1)+C_{W_1}(\Lambda_2)$;

\item{} \label{SubTensorization}
$C_{W_1}(\Lambda_1\ot \Lambda_2)\leq C_{W_1}(\Lambda_1\ot I)+C_{W_1}(\Lambda_2\ot I)$;

\item{} \label{SuperTensorization}
Superadditivity under tensorization: $C_{W_1}(\Lambda_1\ot \Lambda_2)\geq C_{W_1}(\Lambda_1)+C_{W_1}(\Lambda_2)$;

\item{} \label{Unitary}
For a unitary channel $U$, $C_{W_1}(U)=C_{W_1}(U^\dag)$;

\item{} \label{k-qubits}
If the quantum channel $\Lambda$ acts nontrivially on a $k$-qudit subsystem, then we have 
$C_{W_1}(\Lambda)\leq k$.
\end{enumerate}
\end{prop}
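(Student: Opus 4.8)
The plan is to read off all seven properties from the variational definition $C_{W_1}(\Lambda)=\max_{\rho\in D(\cH)}\norm{\rho-\Lambda(\rho)}_{W_1}$, using only that $\norm{\cdot}_{W_1}$ is a genuine norm on $O(\cH)$ (hence faithful, homogeneous, and subject to the triangle inequality) together with its Kantorovich-type dual $\norm{\rho-\sigma}_{W_1}=\max_{\norm{H}_L\le1}\Tr{H(\rho-\sigma)}$, where $\norm{\cdot}_L$ is the quantum Lipschitz constant of Ref.~\cite{PalmaIEEE2021}. I will also use that $\norm{\cdot}_{W_1}$ is invariant under permutations of the qudits, which is immediate from its definition.

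Properties (1)--(4) and (6) are formal consequences of these facts. For (1), $\Lambda=I$ gives $\rho-\Lambda(\rho)=0$ for every $\rho$, while $C_{W_1}(\Lambda)=0$ forces $\norm{\rho-\Lambda(\rho)}_{W_1}=0$ and hence $\Lambda(\rho)=\rho$ for all $\rho$ by faithfulness, so $\Lambda=I$. For (2), I write $\rho-\sum_ip_i\Lambda_i(\rho)=\sum_ip_i(\rho-\Lambda_i(\rho))$ and apply the triangle inequality before maximizing over $\rho$. For (3), the triangle inequality gives $\norm{\rho-\Lambda_1(\Lambda_2(\rho))}_{W_1}\le\norm{\rho-\Lambda_2(\rho)}_{W_1}+\norm{\Lambda_2(\rho)-\Lambda_1(\Lambda_2(\rho))}_{W_1}$, and the two terms are bounded by $C_{W_1}(\Lambda_2)$ and $C_{W_1}(\Lambda_1)$ (the latter after setting $\sigma=\Lambda_2(\rho)$). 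Property (4) follows from (3) by factoring $\Lambda_1\ot\Lambda_2=(\Lambda_1\ot I)\circ(I\ot\Lambda_2)$ and using permutation invariance to identify $C_{W_1}(I\ot\Lambda_2)=C_{W_1}(\Lambda_2\ot I)$. For (6), substituting $\sigma=U\rho U^\dag$ gives $\norm{\rho-U\rho U^\dag}_{W_1}=\norm{\sigma-U^\dag\sigma U}_{W_1}$, and since $\sigma$ ranges over all states as $\rho$ does, the two maxima agree.

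The first substantive statement is the superadditivity (5), for which I would pass to the dual. Choose pure states $\ket{\psi_1},\ket{\psi_2}$ attaining $C_{W_1}(\Lambda_1),C_{W_1}(\Lambda_2)$ and Lipschitz observables $H_1,H_2$ with $\norm{H_j}_L\le1$ certifying these values. Testing $\Lambda_1\ot\Lambda_2$ on the product input $\proj{\psi_1}\ot\proj{\psi_2}$ against the observable $H_1\ot I+I\ot H_2$, the cross contributions collapse because the outputs are products of unit-trace operators, leaving exactly $C_{W_1}(\Lambda_1)+C_{W_1}(\Lambda_2)$. The one nontrivial ingredient is the Lipschitz estimate $\norm{H_1\ot I+I\ot H_2}_L\le\max(\norm{H_1}_L,\norm{H_2}_L)\le1$, which holds because an operator supported on one tensor factor can be approximated on each qudit of that factor independently; this makes $H_1\ot I+I\ot H_2$ an admissible test observable and yields the lower bound.

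Finally, for the locality bound (7) I write $\Lambda=\Lambda_S\ot I_{\bar S}$ with $\abs{S}=k$. The first step is that trace preservation of $\Lambda_S$ forces $\Ptr{S}{\Lambda(\rho)}=\Ptr{S}{\rho}$, i.e.\ $\rho$ and $\Lambda(\rho)$ have identical reduced states on the complement $\bar S$. The remaining task is the $W_1$ estimate that any two states agreeing on $\bar S$ are at $W_1$ distance at most $\abs{S}=k$; granting this and maximizing over $\rho$ gives $C_{W_1}(\Lambda)\le k$. I expect this estimate to be the main obstacle. Working from the primal definition one decomposes $\rho-\Lambda(\rho)$ into $k$ pieces, one localized at each qudit of $S$ and traceless there, by telescoping through the shared $\bar S$-marginal; the difficulty is the sharp constant, since the naive interpolation through locally maximally mixed states loses a factor of two (for $d\ge3$ the map $X\mapsto X-\frac{\I}{d}\ot\Ptr{i}{X}$ is not a trace-norm contraction). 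Obtaining exactly $k$ therefore requires either the refined locality property of $\norm{\cdot}_{W_1}$ established in Ref.~\cite{PalmaIEEE2021} or, equivalently in the dual picture, the bound that $\norm{H}_L\le1$ implies $H$ can be approximated by an operator trivial on $S$ to within $\abs{S}/2$ in operator norm.
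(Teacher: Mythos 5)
Your proposal is correct, and on properties (1)--(4) and (6) it matches the paper's proof, which likewise treats these as formal consequences of the faithfulness of the $W_1$ norm, the triangle inequality, and the substitution $\sigma=U\rho U^\dag$. The genuine divergence is property (5): the paper deduces superadditivity from De Palma et al.'s tensorization lemma $\norm{A}_{W_1}\geq \norm{\Ptr{S}{A}}_{W_1}+\norm{\Ptr{S^c}{A}}_{W_1}$ applied to $A=\rho-(\Lambda_1\ot\Lambda_2)(\rho)$ (trace preservation turns the two partial traces into $\rho_1-\Lambda_1(\rho_1)$ and $\rho_2-\Lambda_2(\rho_2)$, and one then evaluates at a product of optimizers), whereas you work in the Kantorovich dual, testing the product input $\proj{\psi_1}\ot\proj{\psi_2}$ against $H_1\ot I+I\ot H_2$. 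Your argument is sound: the cross terms cancel by unit trace, and your Lipschitz estimate $\norm{H_1\ot I+I\ot H_2}_L\leq\max(\norm{H_1}_L,\norm{H_2}_L)$ holds because, for a qudit in the first factor, the summand $I\ot H_2$ is already trivial on that qudit, so only $H_1$ needs approximating there. The two routes are dual to one another: the paper imports the primal superadditivity of the $W_1$ norm wholesale, while you import the duality theorem and re-derive exactly the special case needed; your version makes explicit that product inputs saturate the bound, while the paper's is a one-liner given its citation. On property (7), you and the paper do the same thing---defer to De Palma et al.'s locality estimate---and your caution about the constant is well placed: the paper's Lemma 8, read literally with $\norm{A}_1\leq 2$, gives only $2k(d^2-1)/d^2>k$, so the sharp bound $C_{W_1}(\Lambda)\leq k$ really does require the refined form of the locality property (equivalently, your dual statement that $\norm{H}_L\leq 1$ permits approximation by an operator trivial on $S$ to within $\abs{S}/2$ in operator norm), and not the naive telescoping through locally maximally mixed states.
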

We prove Proposition \ref{prop:bas_1} in Appendix~\ref{Appen:basic}.
Note that there exists a quantum channel $\Lambda$, such that the inequality \ref{SuperTensorization} holds strictly. For example, for the single-qudit depolarizing channel $D_p(\cdot)=p(\cdot)+(1-p)\Tr{\cdot}I/d$, it holds that 
$C_{W_1}(D_p)=(1-p)(1-1/d)$ and
$C_{W_1}(D_p\ot I)=(1-p)(1-1/d^2)$. Hence, the inequality in \ref{SuperTensorization} holds strictly. See more examples in Table~\ref{tab:examples}.

\begin{table*}
		\begin{tabular}{|c|c|c|}\hline
			quantum channel $\Lambda$& $C_{W_1}(\Lambda)$ &$\AC_{W_1}(\Lambda)$\\ \hline
			Single-qudit depolarizing channel $D_p$&~~$ (1-p)(1-1/d)$~~&$(1-p)(1-1/d^2)$  \\ \hline
		       ~~ Tensor product of depolarizing channels~~&   ~~$\geq n(1-p)(1-1/d)$  ~~    & $n(1-p)(1-1/d^2)$\\ 
			        $D^{\ot n}_p$     & $\leq n(1-p)(1-1/d^2)$  &                                 \\ \hline
			Tensor product of Hadamard gates $H^{\ot n}$&$n$&$n$\\ \hline
			CNOT gates $\prod_i \CNOT_{i,i+1}$&$\Theta(n)$&$\Theta(n)$\\\hline
		\end{tabular}
			\caption{\label{tab:examples} We calculate (or estimate) the quantum Wasserstein complexity $C_{W_1}$ and $\AC_{W_1}$ for 
			several examples (See details in Appendix \ref{appen:exam}). Here, $f(n)=\Theta(n)$ means that there exist constants $c_1,c_2$ such that $c_1n\leq f(n)\leq c_2 n$ for any $n$.}
\end{table*}

The effects of the environment on quantum systems
are usually taken into consideration, which means the ancila qudits should be considered.

\begin{Def}
Given an $n$-qudit quantum channel $\Lambda:D(\mathcal{H})\to D(\mathcal{H})$, the 
\textit{correlation-assisted quantum Wasserstein complexity} $\AC_{W_1}$
for an $n$-qudit quantum channel $\Lambda$ is defined as follows
\begin{eqnarray}\label{def:AC}
\AC_{W_1}(\Lambda)
:=\sup_{m\in \mathbb{Z}_{\geq 0}}
C_{W_1}(\Lambda\ot I_m),
\end{eqnarray}
where $\mathbb{Z}_{\geq 0}$ denotes the set of nonnegative integers, and $I_m$ denotes the $m$-qubit identity operator.
\end{Def}
Note that 
$C_{W_1}(\Lambda\ot I_m)\geq C_{W_1}(\Lambda\ot I_{m-1}), \forall m \in \mathbb{Z}_{\geq 0}$ and
$
C_{W_1}(\Lambda)
\leq n
$.
Hence, $\AC_{W_1}$ is well-defined.

 \begin{prop}\label{prop:bas_2}
The correlation-assisted Wasserstein complexity $\AC_{W_1}(\Lambda)$ 
satisfies the following properties:

\begin{enumerate}[(1)]
    \item{} \label{CAFaithfulness}
Faithfulness: $\AC_{W_1}(\Lambda)=0$ if and only if $\Lambda$ is the identity map;

\item{} \label{CAConvexity}
Convexity: $\AC_{W_1}(\sum_ip_i\Lambda_i)\leq \sum_ip_i\AC_{W_1}(\Lambda_i)$, where 
$p_i\geq 0$ and $\sum_ip_i=1$;

\item{} \label{CASubadditivity}
Subadditivity under concatenation:  $\AC_{W_1}(\Lambda_1\circ \Lambda_2)\leq \AC_{W_1}(\Lambda_1)+\AC_{W_1}(\Lambda_2)$;

\item{} \label{CATensorization}
Additivity under tensorization: $\AC_{W_1}(\Lambda_1\ot \Lambda_2)=\AC_{W_1}(\Lambda_1)+\AC_{W_1}(\Lambda_2)$.
\end{enumerate}
\end{prop}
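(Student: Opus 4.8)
The plan is to derive every property of $\AC_{W_1}$ from the corresponding property of $C_{W_1}$ established in Proposition~\ref{prop:bas_1}, exploiting the fact that $\AC_{W_1}(\Lambda)=\sup_m C_{W_1}(\Lambda\ot I_m)$ is a supremum of quantities we already control. Throughout I would use the two elementary observations that $(\Lambda_1\circ\Lambda_2)\ot I_m=(\Lambda_1\ot I_m)\circ(\Lambda_2\ot I_m)$ and that $C_{W_1}(\Lambda\ot I_m)\leq \AC_{W_1}(\Lambda)$ for every $m$ by definition of the supremum.

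For faithfulness~\ref{CAFaithfulness}, if $\Lambda$ is the identity then each $\Lambda\ot I_m$ is the identity and $C_{W_1}$ vanishes by Proposition~\ref{prop:bas_1}\ref{Faithfulness}, so the supremum is $0$; conversely $\AC_{W_1}(\Lambda)=0$ forces the $m=0$ term $C_{W_1}(\Lambda)$ to vanish, and Proposition~\ref{prop:bas_1}\ref{Faithfulness} again gives $\Lambda=\mathrm{id}$. For convexity~\ref{CAConvexity} and subadditivity~\ref{CASubadditivity}, I would fix $m$, apply the corresponding inequality of Proposition~\ref{prop:bas_1} to the padded channels $\Lambda_i\ot I_m$ (using the factorization of concatenation above), bound each resulting $C_{W_1}(\Lambda_i\ot I_m)$ by $\AC_{W_1}(\Lambda_i)$, and finally take the supremum over $m$. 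These three items are routine.

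The real content is the additivity~\ref{CATensorization}, which splits into two inequalities. For the upper bound I would regroup $\Lambda_1\ot\Lambda_2\ot I_m$ as $A\ot B$ with $A=\Lambda_1$ and $B=\Lambda_2\ot I_m$, apply the sub-tensorization bound of Proposition~\ref{prop:bas_1}\ref{SubTensorization} to get $C_{W_1}(A\ot B)\leq C_{W_1}(A\ot I)+C_{W_1}(B\ot I)$, and note that each term on the right is a padded copy of $\Lambda_1$ or $\Lambda_2$ and hence at most $\AC_{W_1}(\Lambda_1)$ or $\AC_{W_1}(\Lambda_2)$; taking the supremum over $m$ yields $\AC_{W_1}(\Lambda_1\ot\Lambda_2)\leq\AC_{W_1}(\Lambda_1)+\AC_{W_1}(\Lambda_2)$. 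For the lower bound I would start from arbitrary $m_1,m_2\geq 0$ and use the super-tensorization of Proposition~\ref{prop:bas_1}\ref{SuperTensorization} applied to the two padded channels, giving $C_{W_1}(\Lambda_1\ot I_{m_1})+C_{W_1}(\Lambda_2\ot I_{m_2})\leq C_{W_1}((\Lambda_1\ot I_{m_1})\ot(\Lambda_2\ot I_{m_2}))$; then identify the right-hand channel with $(\Lambda_1\ot\Lambda_2)\ot I_{m_1+m_2}$ up to a relabeling of subsystems and bound it by $\AC_{W_1}(\Lambda_1\ot\Lambda_2)$. Taking the suprema over $m_1$ and $m_2$ independently (the right-hand side no longer depends on them) gives $\AC_{W_1}(\Lambda_1)+\AC_{W_1}(\Lambda_2)\leq\AC_{W_1}(\Lambda_1\ot\Lambda_2)$, and combining the two inequalities yields equality.

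The step I expect to be the crux is the relabeling used in the lower bound: the channel $(\Lambda_1\ot I_{m_1})\ot(\Lambda_2\ot I_{m_2})$ and the channel $(\Lambda_1\ot\Lambda_2)\ot I_{m_1+m_2}$ differ only by a permutation that moves the first block of ancillas past $\Lambda_2$, so I must invoke the permutation invariance of the quantum $W_1$ distance---which holds because the distance in~\eqref{eq:def_W1} is symmetric in the subsystem labels---to conclude that $C_{W_1}$ is unchanged. This invariance is precisely what upgrades the mere superadditivity of $C_{W_1}$ to genuine additivity of $\AC_{W_1}$, since the unbounded ancilla register can absorb the asymmetry that obstructs equality at the level of $C_{W_1}$.
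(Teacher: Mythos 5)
Your proposal is correct and follows essentially the same route as the paper: faithfulness, convexity, and subadditivity are inherited from Proposition~\ref{prop:bas_1} applied to the padded channels, and additivity is obtained by combining the sub-tensorization bound \ref{SubTensorization} (upper bound) with superadditivity \ref{SuperTensorization} (lower bound) and taking suprema over the ancilla size. The only difference is that you spell out the regrouping of ancillas and the permutation invariance of the $W_1$ distance explicitly, steps the paper's much terser proof leaves implicit.
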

The proof of Proposition \ref{prop:bas_2} is provided in Appendix~\ref{Appen:basic}.
Note that, unlike $C_{W_1}$, $\AC_{W_1}$ is additive under tensorization.
Based on the definition, it is easy to see that $\AC_{W_1}(\Lambda)\geq C_{W_1}(\Lambda)$. Also, there exists some quantum channel $\Lambda$, e.g., the $1$-qudit depolarizing channel, such that $\AC_{W_1}(\Lambda)>C_{W_1}(\Lambda)$.
(See the examples in Table \ref{tab:examples}).

After studying the basic properties of $C_{W_1}$ and $\AC_{W_1}$,
let us consider an application of the quantum Wasserstein complexity to the study of the circuit complexity of quantum circuits. 
The circuit complexity of a unitary operator $U$ is defined as the minimum number of basic gates needed to  generate  $U$ \cite{nielsen2010quantum,kitaev2002classical,aaronson2016complexity} (See Fig.~\ref{fig1}). Here, we consider the circuit cost of quantum circuits, which 
was introduced by Nielson et al.~\cite{nielsen2006geometric,nielsen2006optimal,nielsen2006quantum} as a geodesic distance
from the identity operator to the target unitary with respect to a given metric,
and was shown to provide a useful lower bound for the quantum circuit complexity.

\begin{figure}[t]
  \center{\includegraphics[width=7cm]  {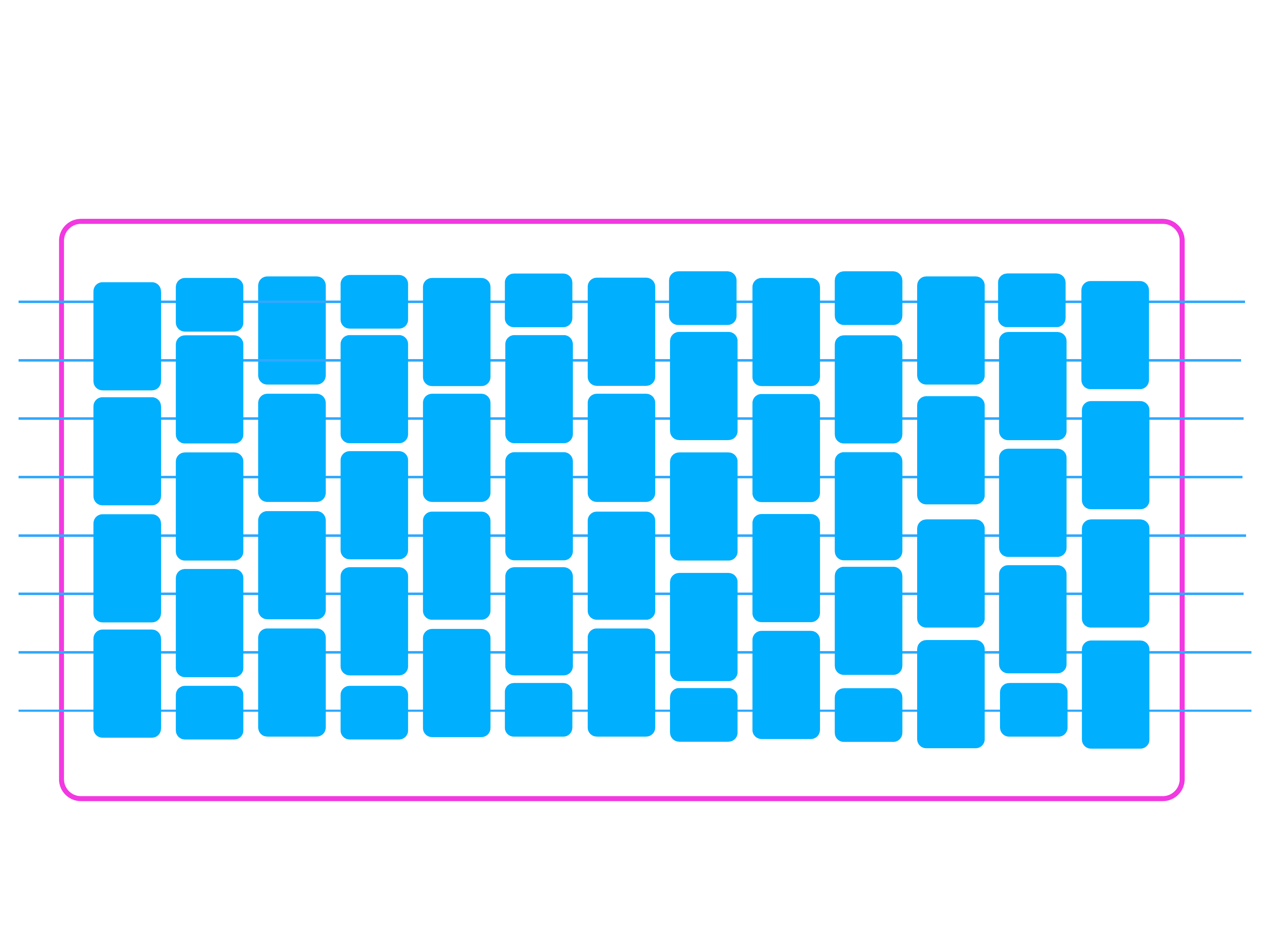}}     
  \caption{
  Diagram showing a quantum circuit composed of single-qubit and two-qubit basic gates. Quantum circuit complexity quantifies the minimum number of basic gates needed to generate a target unitary.}
  \label{fig1}
 \end{figure}

Here, the circuit cost of a unitary $U\in \mathrm{SU}(d^n)$ with respect to traceless Hermitian operators  $h_1, \ldots, h_m$, supported on 2 qudits and normalized as $\norm{h_i}_{\infty}=1$, is defined to be
\begin{eqnarray}
\label{eq:def_cost}
\Cost(U):=\inf\int^1_0
\sum^m_{j=1}|r_j(s)| d s,
\end{eqnarray}
where the infimum in \eqref{eq:def_cost} is taken over all continuous functions $r_j:[0,1]\to \mathbb R$ that satisfy
$
H(s)=
\sum^m_{j=1} r_j(s)h_j
$ and
$
U=\mathcal P\exp\left(-i\int^1_0H(s)ds\right),
\label{eq:path-orderedU}
$
where $\mathcal P$ denotes the path-ordering operator. (See Fig.~\ref{fig2})

\begin{figure}[t]
  \center{\includegraphics[width=7cm]  {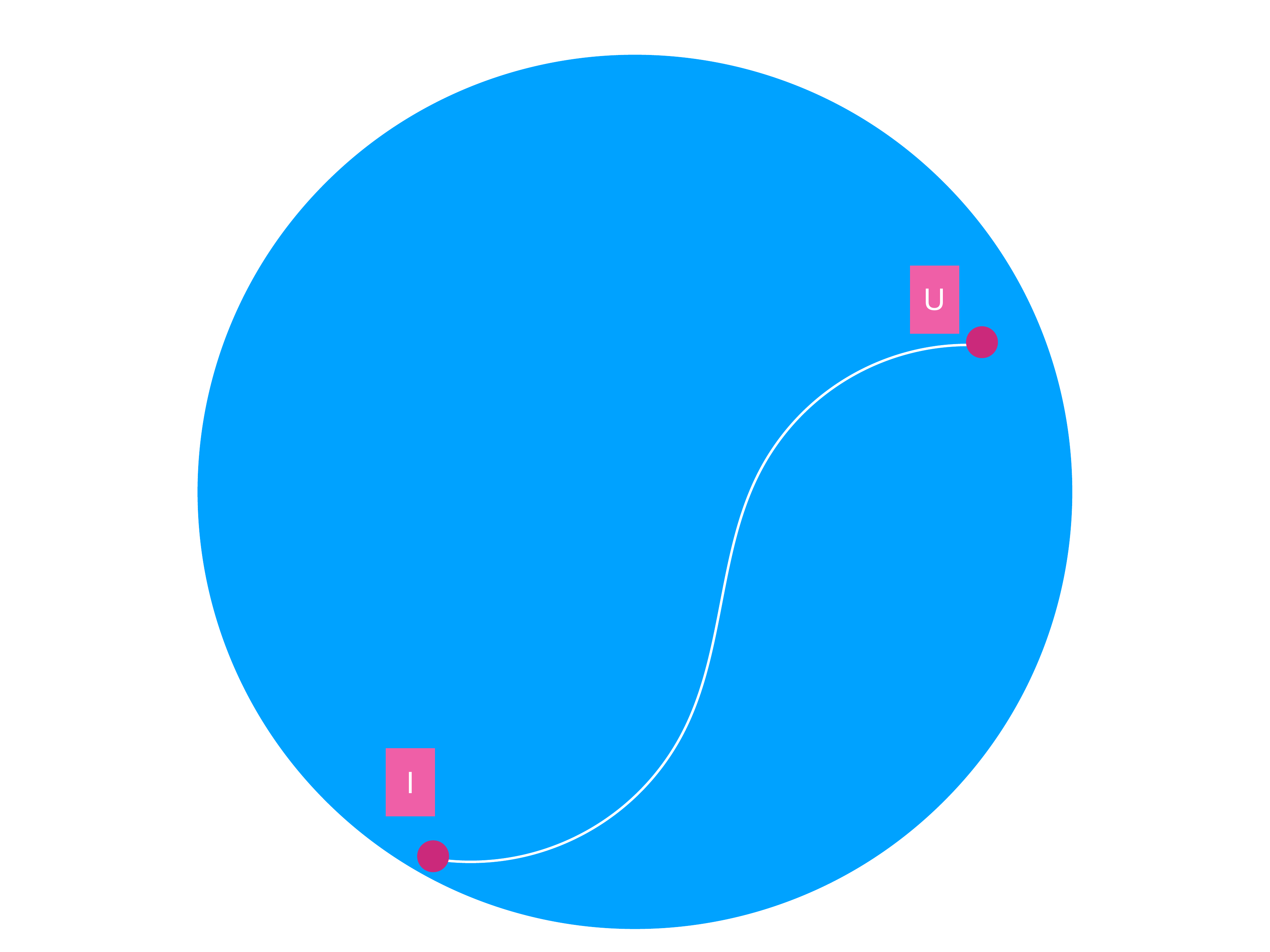}}     
  \caption{
  The circuit cost is the geodesic distance from the identity operator $I$ to the target unitary $U$ with respect to some metric, as illustrated by the length of the curve connecting $I$ and $U$ in the diagram.}
  \label{fig2}
 \end{figure}

For an $n$-qudit Hamiltonian acting nontrivially on a $k$-qudit subsystem, there is a simple upper bound on the total change of the quantum Wasserstein complexity through unitary evolution.
\begin{lem}
Given an $n$-qudit system with a $k$-qudit Hamiltonian $H$, 
the change of 
the quantum Wasserstein complexity for the unitary evolution $U_t=\exp(-i H t)$
with any time interval $\Delta t$ is bounded as follows
\begin{eqnarray}
C_{W_1}
(U_{t+\Delta t})-C_{W_1}(U_t)
\leq k.
\end{eqnarray}
\end{lem}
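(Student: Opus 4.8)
The plan is to reduce this to two properties already established in Proposition~\ref{prop:bas_1}, exploiting the fact that the evolution is generated by a single time-independent Hamiltonian. The crucial observation is that, because $H$ does not depend on time, the propagator factorizes multiplicatively: $U_{t+\Delta t}=\exp(-iH(t+\Delta t))=\exp(-iH\Delta t)\exp(-iHt)=U_{\Delta t}\,U_t$. Passing to the associated unitary channels $\mathcal U_s(\cdot)=U_s(\cdot)U_s^\dag$, this reads $\mathcal U_{t+\Delta t}=\mathcal U_{\Delta t}\circ \mathcal U_t$.

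First I would apply subadditivity under concatenation, property~\ref{Subadditivity} of Proposition~\ref{prop:bas_1}, to this factorization. This yields
\begin{eqnarray}
C_{W_1}(U_{t+\Delta t})=C_{W_1}(\mathcal U_{\Delta t}\circ\mathcal U_t)\leq C_{W_1}(U_{\Delta t})+C_{W_1}(U_t),
\end{eqnarray}
and rearranging gives the bound $C_{W_1}(U_{t+\Delta t})-C_{W_1}(U_t)\leq C_{W_1}(U_{\Delta t})$. The entire problem is thus reduced to controlling the single quantity $C_{W_1}(U_{\Delta t})$.

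Next I would bound $C_{W_1}(U_{\Delta t})$ using the locality of the generator. Since $H$ is a $k$-qudit Hamiltonian, it is supported on a $k$-qudit subsystem, so $U_{\Delta t}=\exp(-iH\Delta t)$ acts nontrivially only on that same $k$-qudit subsystem (the evolution leaves the complementary factors untouched). Invoking property~\ref{k-qubits} of Proposition~\ref{prop:bas_1}, which states that a channel acting nontrivially on a $k$-qudit subsystem has $C_{W_1}\leq k$, I obtain $C_{W_1}(U_{\Delta t})\leq k$. Chaining this with the previous display gives exactly $C_{W_1}(U_{t+\Delta t})-C_{W_1}(U_t)\leq k$, completing the argument.

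I do not anticipate a genuine obstacle here; the statement is essentially a corollary of the two structural properties of $C_{W_1}$. The only point requiring a moment's care is the first step: one must ensure the factorization $U_{t+\Delta t}=U_{\Delta t}U_t$ is valid, which relies on $H$ being time-independent so that the two exponentials commute and combine cleanly, and one must confirm that the support of the evolution matches the support of $H$ rather than growing under exponentiation. Both are immediate for a fixed $k$-qudit $H$, so the proof is short.
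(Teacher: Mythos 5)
Your proof is correct and matches the paper's own argument exactly: both factor $U_{t+\Delta t}=U_{\Delta t}\circ U_t$, apply subadditivity under concatenation (property~(3) of Proposition~1), and bound $C_{W_1}(U_{\Delta t})\leq k$ via the locality property~(7). Your write-up is merely more explicit about the time-independence of $H$ justifying the factorization, which the paper leaves implicit.
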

\begin{proof}
This comes from the fact that $C_{W_1}
(U_{t+\Delta t})=C_{W_1}
(U_{\Delta t}\circ U_{t})\leq C_{W_1}(U_t)+C_{W_1}(U_{\Delta t})$, and 
$C_{W_1}(U_{\Delta t})\leq k$.
\end{proof}
To quantify the change of the quantum Wasserstein complexity in an infinitesimally small time interval, we need
to introduce the quantum Wasserstein rate in the unitary dynamics 
generated by a Hamiltonian $H$, which will be useful for the connection
between the quantum Wasserstein complexity and the circuit complexity.
Given an $n$-qudit Hamiltonian $H$ and an
$n$-qudit pure state $\ket{\psi}$, the  quantum Wasserstein rate
of the unitary $U_t=\exp(-i H t)$ on the state $\ket{\psi}$ is defined as 
\begin{eqnarray}
R_{W_1}(\psi, H)
=\lim_{\Delta t\to 0}
 \norm{\frac{\proj{\psi}-U_{\Delta t}\proj{\psi} U^\dag_{\Delta t}}{\Delta t}}_{W_1}.
\end{eqnarray}

\begin{thm}[Small incremental quantum Wasserstein complexity]\label{thm:WCom_rate}
Given an $n$-qudit system with the
Hamiltonian $H$ acting on a $k$-qudit subsystem, and an $n$-qudit state 
$\ket{\psi}$, one has
\begin{eqnarray}
R_{W_1}(\psi, H)\leq 2\sqrt{2}k\norm{H}_{\infty},
\end{eqnarray}
where $\norm{\cdot}_{\infty}$ denotes the operator norm.
\end{thm}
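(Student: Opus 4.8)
The plan is to reduce the rate to the quantum $W_1$ norm of a single commutator, and then to exploit the $k$-locality of $H$ through a telescoping decomposition adapted to the definition \eqref{eq:def_W1}.

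First I would expand the dynamics to first order. Writing $P=\proj{\psi}$ and $U_{\Delta t}=e^{-iH\Delta t}=I-iH\Delta t+O(\Delta t^2)$, one finds
\begin{eqnarray}
\frac{P-U_{\Delta t}PU_{\Delta t}^\dag}{\Delta t}=i[H,P]+O(\Delta t).
\end{eqnarray}
Since $\norm{\cdot}_{W_1}$ is a norm, hence continuous, taking $\Delta t\to0$ gives $R_{W_1}(\psi,H)=\norm{i[H,P]}_{W_1}$. Note that $i[H,P]$ is traceless and Hermitian, so it lies in $O(\mathcal H)$ and its $W_1$ norm is well-defined. It thus suffices to bound $\norm{i[H,P]}_{W_1}$.

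Second, I would exploit that $H$ is supported on a $k$-qudit subsystem $S$. Let $\mathbb{E}_T(\cdot):=\frac{I_T}{d^{|T|}}\ot\Ptr{T}{\cdot}$ be the channel that discards the qudits in $T$ and replaces them by the maximally mixed state. Relabelling $S=\set{1,\dots,k}$, I define the telescoping pieces
\begin{eqnarray}
X_j:=\mathbb{E}_{\{j+1,\dots,k\}}(i[H,P])-\mathbb{E}_{\{j,\dots,k\}}(i[H,P])
\end{eqnarray}
for $j\in S$, and $X_i:=0$ for $i\notin S$. The sum telescopes to $i[H,P]-\mathbb{E}_S(i[H,P])$. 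The key locality fact is that $\Ptr{S}{i[H,P]}=0$: since $H=H_S\ot I_{\bar S}$, cyclicity of the trace on $S$ gives $\Ptr{S}{HP}=\Ptr{S}{PH}$, so $\mathbb{E}_S(i[H,P])=0$ and hence $\sum_i X_i=i[H,P]$. Each $X_j$ is manifestly traceless and Hermitian, and writing $X_j=(I-\mathbb{E}_{\{j\}})\,\mathbb{E}_{\{j+1,\dots,k\}}(i[H,P])$ one verifies $\Ptr{j}{X_j}=0$. Thus $\set{X_i}$ is a feasible decomposition in the definition of $\norm{\cdot}_{W_1}$, and it is localized to the $k$ qudits of $S$ rather than all $n$.

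Finally I would bound each term. Because every $\mathbb{E}_T$ is trace-preserving and completely positive, it contracts the trace norm, so $\norm{X_j}_1\le 2\norm{i[H,P]}_1$, and the triangle inequality together with $\norm{P}_1=1$ gives $\norm{i[H,P]}_1\le2\norm{H}_\infty$ (sharply, the rank-two structure of $i[H,P]$ yields $\norm{i[H,P]}_1=2\sqrt{\Var_\psi(H)}$). Combining,
\begin{eqnarray}
\norm{i[H,P]}_{W_1}&\le&\frac12\sum_{j\in S}\norm{X_j}_1\le\frac12\cdot k\cdot 4\norm{H}_\infty\nonumber\\
&=&2k\norm{H}_\infty\le 2\sqrt2\,k\norm{H}_\infty.
\end{eqnarray}
The main obstacle is the second step: producing a decomposition of the \emph{global} operator $i[H,P]$ into exactly $k$ local pieces that simultaneously satisfy $\Ptr{j}{X_j}=0$ and sum correctly. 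This hinges entirely on the identity $\Ptr{S}{i[H,P]}=0$, which is precisely where the $k$-locality of $H$ (rather than the system size $n$) enters; the trace-norm bookkeeping in the last step is then routine, and any slack there only improves the constant relative to the stated $2\sqrt2\,k$.
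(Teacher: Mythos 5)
Your proof is correct, and it in fact establishes the slightly stronger bound $R_{W_1}(\psi,H)\leq 2k\norm{H}_\infty$. Both your argument and the paper's hinge on the same locality fact---that the partial trace over the $k$-qudit support $S$ of $H$ annihilates the relevant operator---but the execution differs in two genuine ways. First, the paper works at finite $\Delta t$: it invokes Lemma~\ref{lem:boun_1norm} (De Palma et al.) as a black box to get $\norm{\proj{\psi}-U_{\Delta t}\proj{\psi}U_{\Delta t}^\dag}_{W_1}\leq k\frac{d^2-1}{d^2}\norm{\proj{\psi}-U_{\Delta t}\proj{\psi}U_{\Delta t}^\dag}_1$, and then bounds that trace norm through a Taylor expansion of $1-|\bra{\psi}U_{\Delta t}\ket{\psi}|^2$ in powers of $\ad_H$, which is where its factor $\sqrt{2}$ and the $e^{\norm{H}_\infty|\Delta t|}$ correction enter; only at the end does it divide by $\Delta t$ and take the limit. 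You pass to the limit first, reducing everything to $\norm{i[H,P]}_{W_1}$, and replace the Taylor machinery by the elementary H\"older bound $\norm{[H,P]}_1\leq 2\norm{H}_\infty$. Second, where the paper cites Lemma~\ref{lem:boun_1norm}, you re-derive it inline: your telescoping decomposition $X_j=(I-\mathbb{E}_{\{j\}})\,\mathbb{E}_{\{j+1,\dots,k\}}(i[H,P])$, made feasible by $\Ptr{S}{i[H,P]}=0$, is essentially the known proof of that lemma, with per-site constant $1$ in place of $(d^2-1)/d^2$. As for what each approach buys: yours is self-contained, avoids the series bookkeeping, and yields a cleaner and sharper constant; the paper's finite-$\Delta t$ formulation (Lemma~\ref{lem:rate}) gives a quantitative bound valid for every $\Delta t$, which is precisely what its proof of Theorem~\ref{thm:main1} reuses inside the Trotter decomposition (taking $\Delta t = 1/l$ and letting $l\to\infty$), so the limiting rate alone would not slot directly into that downstream argument.
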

We prove Theorem \ref{thm:WCom_rate} in  Appendix \ref{Appen:appl}.
The small incremental quantum Wasserstein complexity is a key point to show the connection between circuit cost and the quantum Wasserstein complexity, for which we have the following relationship.

\begin{thm}\label{thm:main1}
Given a unitary $U$, the circuit cost $\Cost(U)$ is lower bounded by the quantum Wasserstein complexity
$C_{W_1}(U)$ as
\begin{eqnarray}
\Cost(U)\geq 4\sqrt{2}C_{W_1}(U).
\end{eqnarray}
\end{thm}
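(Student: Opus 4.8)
The plan is to bound $C_{W_1}(U)$ by integrating the instantaneous quantum Wasserstein rate along an optimal generating path and controlling that rate through Theorem~\ref{thm:WCom_rate}. Fix an arbitrary admissible control $\{r_j\}$ with $H(s)=\sum_j r_j(s)h_j$ and $U=\mathcal P\exp(-i\int_0^1 H(s)\,ds)$, and set $U_s=\mathcal P\exp(-i\int_0^s H(s')\,ds')$, so that $U_0=I$ and $U_1=U$. For a fixed pure state $\ket{\psi}$, consider the curve of states $\rho_s=U_s\proj{\psi}U_s^\dag$ in $D(\mathcal H)$, whose endpoints are $\rho_0=\proj{\psi}$ and $\rho_1=U\proj{\psi}U^\dag$. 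A direct differentiation using $\tfrac{d}{ds}U_s=-iH(s)U_s$ gives $\dot\rho_s=-i[H(s),\rho_s]$, which is traceless and Hermitian, hence lies in $O(\mathcal H)$.

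The first key step is to note that $\norm{\cdot}_{W_1}$, being the minimal-decomposition norm of \eqref{eq:def_W1}, is a genuine norm on $O(\mathcal H)$, so the quantum $W_1$ distance is a bona fide metric on $D(\mathcal H)$. For any partition $0=s_0<\cdots<s_N=1$ the triangle inequality yields $\norm{\rho_1-\rho_0}_{W_1}\le\sum_k\norm{\rho_{s_{k+1}}-\rho_{s_k}}_{W_1}$, and refining the partition bounds the endpoint distance by the arc length,
\begin{eqnarray}
\norm{\rho_1-\rho_0}_{W_1}\le\int_0^1\norm{\dot\rho_s}_{W_1}\,ds.
\end{eqnarray}
Since $\rho_s=\proj{\phi_s}$ with $\ket{\phi_s}=U_s\ket{\psi}$ is pure, and the increment of the curve over $[s,s+\Delta t]$ is generated by $H(s)$ up to $O(\Delta t^2)$, the integrand is exactly the instantaneous rate, $\norm{\dot\rho_s}_{W_1}=R_{W_1}(\phi_s,H(s))$.

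The second key step exploits the decomposition $\dot\rho_s=\sum_j r_j(s)\,(-i[h_j,\rho_s])$ together with absolute homogeneity and the triangle inequality of $\norm{\cdot}_{W_1}$, which give
\begin{eqnarray}
R_{W_1}(\phi_s,H(s))\le\sum_j|r_j(s)|\,R_{W_1}(\phi_s,h_j).
\end{eqnarray}
Because each $h_j$ is supported on $k=2$ qudits with $\norm{h_j}_\infty=1$, Theorem~\ref{thm:WCom_rate} bounds every term by $R_{W_1}(\phi_s,h_j)\le 2\sqrt{2}\cdot 2=4\sqrt{2}$. Integrating over $s$, then maximizing over $\ket{\psi}$ and taking the infimum over all admissible controls, produces $C_{W_1}(U)\le 4\sqrt{2}\int_0^1\sum_j|r_j(s)|\,ds$, i.e.\ the stated lower bound on $\Cost(U)$ in terms of $C_{W_1}(U)$.

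The main obstacle is the passage from the pointwise small-incremental estimate to the integral inequality: one must justify that $s\mapsto\rho_s$ is rectifiable in the $W_1$ metric and that its length is computed by the integral of the metric derivative, and that the $\Delta t\to 0$ limit defining $R_{W_1}(\phi_s,H(s))$ coincides with $\norm{-i[H(s),\rho_s]}_{W_1}$ uniformly enough in $s$ to interchange limit and integral (absorbing the higher-order path-ordering corrections). The subadditivity step is comparatively routine once $\norm{\cdot}_{W_1}$ is established as a norm on $O(\mathcal H)$, so the rectifiability and uniform-convergence estimate is where the real care is required.
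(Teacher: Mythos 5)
Your proof is correct in substance, but it takes a genuinely different route from the paper's. The paper proceeds discretely: it Trotterizes $U$ into $V_N=\prod_{t=1}^N W_t$, further splits each $W_t$ by the Lie--Trotter product formula into elementary factors $\exp\left(-\tfrac{i}{Nl}r_j(t/N)h_j\right)$, applies the finite-$\Delta t$ incremental bound (Lemma~\ref{lem:rate}, the quantitative version of Theorem~\ref{thm:WCom_rate}, with its explicit $e^{\norm{H}_{\infty}|\Delta t|}$ error factor) to each factor, and then takes a double limit $l\to\infty$, $N\to\infty$. You instead work in continuous time: you bound the endpoint $W_1$ distance by the arc length $\int_0^1\norm{\dot\rho_s}_{W_1}\,ds$, split the generator by linearity, $\dot\rho_s=\sum_j r_j(s)\,(-i[h_j,\rho_s])$, and apply the rate bound termwise. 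Where the paper splits at the level of unitaries (requiring the product formula and explicit error bookkeeping), you split at the level of generators (requiring only that $\norm{\cdot}_{W_1}$ is a norm); this is cleaner and replaces the double limit with the single $\Delta t\to 0$ limit already encapsulated in Theorem~\ref{thm:WCom_rate}. The technical obstacle you flag---rectifiability and the interchange of limit and integral---is in fact routine here: since the $r_j$ are continuous (as required in \eqref{eq:def_cost}) and the dimension is finite, $s\mapsto\rho_s$ is $C^1$, so $\rho_1-\rho_0=\int_0^1\dot\rho_s\,ds$ and the arc-length inequality is just the triangle inequality for integrals; moreover both the metric derivative and $R_{W_1}(\phi_s,H(s))$ equal $\norm{[H(s),\rho_s]}_{W_1}$ by continuity of the norm, so no uniformity argument is needed. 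One caveat applies equally to your argument and to the paper's own proof: both derivations actually establish $C_{W_1}(U)\le 4\sqrt{2}\,\Cost(U)$, i.e.\ $\Cost(U)\ge \tfrac{1}{4\sqrt{2}}\,C_{W_1}(U)$; the placement of the factor $4\sqrt{2}$ in the theorem statement (and in the final line of the paper's proof) is inconsistent with what is derived, so your closing claim of having obtained ``the stated lower bound'' inherits exactly the same constant discrepancy rather than introducing a new error.
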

We prove Theorem \ref{thm:main1} in Appendix \ref{Appen:appl}.
This relation can be generalized to the correlation-assisted quantum  Wasserstein complexity $\AC_{W_1}$ defined in \eqref{def:AC}.
\begin{cor}
Under the assumptions of  Theorem~\ref{thm:main1}, we have 
\begin{eqnarray}
\Cost(U)\geq 4\sqrt{2}\AC_{W_1}(U).
\end{eqnarray}
\end{cor}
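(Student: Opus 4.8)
The plan is to leverage Theorem~\ref{thm:main1} directly, applied not to $U$ itself but to each padded unitary $U\ot I_m$, and then to exploit the monotonicity of the circuit cost under the addition of idle ancillas.

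First I would establish the key inequality $\Cost(U\ot I_m)\leq \Cost(U)$ for every $m\in\mathbb{Z}_{\geq 0}$. The idea is that any admissible control path realizing $U$ on the original system lifts verbatim to a path realizing $U\ot I_m$ on the enlarged system: each two-qudit generator $h_j$ is reinterpreted as $h_j\ot I$, which is still a normalized, traceless, two-qudit Hermitian operator on the larger system, and the same coefficient functions $r_j(s)$ produce the time-ordered exponential $\mathcal{P}\exp\left(-i\int_0^1 H(s)\,ds\right)\ot I_m = U\ot I_m$, since the ancilla qubits never appear in the generators. Hence the infimum defining $\Cost(U\ot I_m)$ ranges over a set containing all paths used for $U$, giving $\Cost(U\ot I_m)\leq \Cost(U)$. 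One minor point to check is the $\mathrm{SU}$ normalization: $U\ot I_m$ lies in the appropriate special unitary group up to an irrelevant global phase, so the cost is unaffected.

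Next I would invoke Theorem~\ref{thm:main1} for the unitary $U\ot I_m$, regarded as a unitary on the enlarged system comprising the original qudits together with the $m$ ancilla qubits, yielding $\Cost(U\ot I_m)\geq 4\sqrt{2}\,C_{W_1}(U\ot I_m)$. Chaining this with the inequality from the previous step gives $\Cost(U)\geq 4\sqrt{2}\,C_{W_1}(U\ot I_m)$ uniformly in $m$. Finally, taking the supremum over all $m\in\mathbb{Z}_{\geq 0}$ on the right-hand side and recalling the definition $\AC_{W_1}(U)=\sup_m C_{W_1}(U\ot I_m)$ from \eqref{def:AC} yields $\Cost(U)\geq 4\sqrt{2}\,\AC_{W_1}(U)$, as claimed.

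I expect the only real obstacle to be the first step: one must argue carefully that embedding the two-qudit generators into the larger Hilbert space preserves both their two-qudit support and their normalization $\norm{h_j}_\infty=1$, so that the lifted path is genuinely admissible for the cost functional of the enlarged system and no extra cost is incurred by the idle ancillas. Once that monotonicity of $\Cost$ under padding is secured, the remainder is an immediate application of Theorem~\ref{thm:main1} followed by a supremum.
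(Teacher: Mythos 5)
Your argument is correct, but it takes a genuinely different route from the paper's. The paper never introduces the cost functional on the enlarged system at all: its entire proof is the remark that Lemma~\ref{lem:rate} holds unchanged when the state lives on an $(n+m)$-qudit system while the Hamiltonian still acts on a $k$-qudit subsystem of the original register, so the proof of Theorem~\ref{thm:main1} reruns verbatim --- same Trotter path for $U$, same generators --- with the maximization now taken over ancilla-extended states, giving $\Cost(U)\geq 4\sqrt{2}\,C_{W_1}(U\ot I_m)$ directly for every $m$; the supremum over $m$ in \eqref{def:AC} then finishes. You instead apply Theorem~\ref{thm:main1} as a black box to the padded unitary $U\ot I_m$ and close the gap with the monotonicity $\Cost(U\ot I_m)\leq\Cost(U)$, obtained by lifting control paths via $h_j\mapsto h_j\ot I$. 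What your route buys is modularity: you never re-open the theorem's proof, and the padding lemma is of independent interest. What it costs is exactly the point you flag as the crux: you must posit that the cost functional on the enlarged system admits the lifted operators $h_j\ot I$ as legitimate normalized two-local generators --- a natural but unstated convention about Nielsen's cost on the bigger group --- whereas the paper sidesteps this entirely, because in its argument the ancillas enter only through the states over which $C_{W_1}$ maximizes, never through the dynamics or the generator set. (Both routes share the minor technicality that, for $d\neq 2$, the padded system mixes $d$-dimensional qudits with qubit ancillas, so the $W_1$ machinery must be read on heterogeneous local dimensions; this affects the paper's own proof equally and is not a gap specific to your argument.)
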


Besides the circuit cost, another important complexity measure of 
quantum circuits is the experimental cost to implement the circuit. Let us consider an $n$-qubit quantum circuit with a
 sequence of quantum gates
$U=\prod_l U_l$, $U_l=\exp(-iH_lT_l)$, i.e., 
we run the gate $U_1$ first, then $U_2$, and so on. $T_l$ is the runtime of the $l$-th gate, and for each $l$, the time-independent Hamiltonian $H_l$ acting on $k_l$ qubits has the following spectral decomposition:
$
H_l=\sum^{2^{k_l}}_{i=1}
h_i\proj{h_i}, h_{i}\geq h_{i-1}, \forall i
$. 

Girolami and Anz\`a~\cite{GirolamiPRL2021}  proposed using the seminorm $E_l:=(h_{2^{k_l}}-h_1)/2$ defined by the Hamiltonian $H_l$ \cite{BoixoPRL2007} to define the experimental cost of implementing the gate $U_l$ and the circuit $U$ in terms of physical resources as 
\begin{eqnarray}
\mathcal{R}_{U_l}:=k_lE_lT_l,\quad
\mathcal{R}_U
:=\sum_l\mathcal{R}_{U_l}.
\end{eqnarray}
Previously, the experimental cost of the quantum circuits was investigated using a weighted version of the Bures distance \cite{GirolamiPRL2021}. Here
we find a new connection between the
experimental cost $\mathcal{R}_U$ and the quantum Wasserstein complexity  
$C_{W_1}(U)$ as follows:

\begin{thm}\label{thm:exp_cost}
The experimental cost $\mathcal{R}(U)$ of an $n$-qubit quantum circuit $U=\prod_l U_l$ is bounded from below by the 
quantum Wasserstein complexity $C_{W_1}(U)$ as
\begin{eqnarray}\label{eq:main_exp}
\mathcal{R}(U)
\geq \frac{1}{2}C_{W_1}(U).
\end{eqnarray}
\end{thm}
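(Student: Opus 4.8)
The plan is to reduce the bound to a single-gate estimate and then sum. Viewing $U=\prod_l U_l$ as a composition of unitary channels and iterating the subadditivity under concatenation (Proposition~\ref{prop:bas_1}), one has $C_{W_1}(U)\le\sum_l C_{W_1}(U_l)$. It therefore suffices to prove the per-gate bound $C_{W_1}(U_l)\le 2k_lE_lT_l$; summing over $l$ then gives $C_{W_1}(U)\le 2\sum_l k_lE_lT_l=2\mathcal R(U)$, which rearranges to \eqref{eq:main_exp}.

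For the per-gate bound, fix a pure state $\ket{\psi}$, let $S_l$ be the $k_l$-qubit block on which $H_l$ acts, and set $X=\proj{\psi}-U_l\proj{\psi}U_l^\dag$. I would combine two ingredients. \emph{Locality.} Since $U_l$ acts trivially outside $S_l$, tracing out the whole block is unaffected by it, so $\Ptr{S_l}{U_l\proj{\psi}U_l^\dag}=\Ptr{S_l}{\proj{\psi}}$ and hence $\Ptr{S_l}{X}=0$. This allows an admissible decomposition in \eqref{eq:def_W1} charging only the $k_l$ qubits of $S_l$: ordering them $1,\dots,k_l$ and putting $\rho_j=\frac{I_{\{1,\dots,j\}}}{d^{\,j}}\ot\Ptr{\{1,\dots,j\}}{X}$, so that $\rho_0=X$ and $\rho_{k_l}=0$, the telescoping terms $X_j:=\rho_{j-1}-\rho_j$ are traceless Hermitian, satisfy $\Ptr{j}{X_j}=0$, and obey $\norm{X_j}_1\le 2\norm{X}_1$ by contractivity of the partial trace. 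Then $X=\sum_{j=1}^{k_l}X_j$ is admissible, so
\[
\norm{X}_{W_1}\le\frac12\sum_{j=1}^{k_l}\norm{X_j}_1\le k_l\norm{X}_1.
\]
\emph{Magnitude.} Because the channel $U_l$ is invariant under the shift $H_l\mapsto H_l-\frac{h_1+h_{2^{k_l}}}{2}I$ (a global phase), I may assume $\norm{H_l}_\infty=E_l$; writing $X=i\int_0^{T_l}[H_l,e^{-iH_ls}\proj{\psi}e^{iH_ls}]\,ds$ and using $\norm{[H_l,\sigma]}_1\le 2\norm{H_l}_\infty$ for states $\sigma$ yields $\norm{X}_1\le 2E_lT_l$. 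Combining the two and maximizing over $\ket{\psi}$ gives $C_{W_1}(U_l)\le 2k_lE_lT_l$.

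The hard part is the locality ingredient: converting the globally supported $X$ into a $W_1$-admissible decomposition localized on the $k_l$ active qubits while retaining the control $\norm{X_j}_1\le 2\norm{X}_1$. This is precisely what generates the factor $k_l$, and hence the constant $\tfrac12$, and it rests on the identity $\Ptr{S_l}{X}=0$. The magnitude bound and the phase-shift reduction from $\norm{H_l}_\infty$ to the seminorm $E_l$ are routine; I would only verify carefully that each $X_j$ is traceless Hermitian and that the global-phase bookkeeping in the shift is consistent.
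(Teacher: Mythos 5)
Your proof is correct, and its overall architecture matches the paper's: both arguments reduce \eqref{eq:main_exp} to the per-gate estimate $C_{W_1}(U_l)\le 2k_lE_lT_l$ (you via iterated subadditivity from Proposition~\ref{prop:bas_1}, the paper via a telescoping triangle inequality over intermediate states $\ket{\psi_l}$, which amounts to the same thing), and both obtain that estimate as a locality factor $k_l$ times a trace-norm bound $2E_lT_l$. The difference is in how the two ingredients are justified, and here your route is genuinely more self-contained. For locality, the paper simply invokes Lemma~\ref{lem:boun_1norm} of De Palma et al.\ (which gives the slightly sharper constant $k_l(d^2-1)/d^2$, immediately relaxed to $k_l$); your telescoping decomposition $X_j=\rho_{j-1}-\rho_j$ is in effect a reproof of that lemma from the definition \eqref{eq:def_W1}, and it checks out: each $X_j$ is traceless and Hermitian, satisfies $\Ptr{j}{X_j}=0$ because tracing out site $j$ sends both $\rho_{j-1}$ and $\rho_j$ to $\frac{I_{\{1,\dots,j-1\}}}{d^{\,j-1}}\ot\Ptr{\{1,\dots,j\}}{X}$, and obeys $\norm{X_j}_1\le 2\norm{X}_1$ by contractivity of the partial trace. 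For the magnitude, the paper cites a quantum speed limit \cite{Deffner_2017} bounding the trace-norm velocity by $2\sqrt{\Var_{\psi}[H_l]}$ and then uses $\Var_{\psi}[H_l]\le E_l^2$; your spectral centering $H_l\mapsto H_l-\tfrac{1}{2}\bigl(h_1+h_{2^{k_l}}\bigr)I$ (which leaves the channel, $E_l$, and $T_l$ unchanged) followed by the H\"older bound $\norm{[H_l,\sigma]}_1\le 2\norm{H_l}_\infty$ reaches the same $2E_lT_l$ by elementary means, avoiding the speed-limit citation entirely; the variance bound is pointwise sharper, but the paper discards that advantage when it relaxes the variance to $E_l^2$, so nothing is lost. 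Net effect: identical constants and identical final inequality, with your version trading two external results for short explicit arguments, and the paper's version exposing the connection to quantum speed limits.
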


The proof of Theorem~\ref{thm:exp_cost} is presented in Appendix~\ref{Appen:appl}.
A similar argument also works for $\AC_{W_1}$ defined in \eqref{def:AC}. 
\begin{cor}
Under the same conditions as in Theorem~\ref{thm:exp_cost}, we have 
\begin{eqnarray}
\mathcal{R}(U)
\geq \frac{1}{2}\AC_{W_1}(U).
\end{eqnarray}
\end{cor}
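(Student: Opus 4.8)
The plan is to reduce the corollary directly to Theorem~\ref{thm:exp_cost} by exploiting the definition $\AC_{W_1}(U)=\sup_{m\in\mathbb{Z}_{\geq 0}} C_{W_1}(U\ot I_m)$ together with the observation that appending passive ancilla qubits leaves the experimental cost unchanged. First I would fix an arbitrary $m\in\mathbb{Z}_{\geq 0}$ and view $U\ot I_m$ as an $(n+m)$-qubit circuit with the induced gate decomposition $U\ot I_m=\prod_l (U_l\ot I_m)$, where each factor satisfies $U_l\ot I_m=\exp(-i(H_l\ot I_m)T_l)$.

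The key step is to establish that $\mathcal{R}(U\ot I_m)=\mathcal{R}(U)$. For each $l$, the generator of $U_l\ot I_m$ is $H_l\ot I_m$. Since $H_l$ acts nontrivially on only $k_l$ qubits and the appended factor is an identity, the generator $H_l\ot I_m$ still acts nontrivially on exactly $k_l$ qubits, so the locality count is preserved. Moreover, the spectrum of $H_l\ot I_m$ agrees with that of $H_l$ up to multiplicity, so the extreme eigenvalues $h_1$ and $h_{2^{k_l}}$—and hence the seminorm $E_l=(h_{2^{k_l}}-h_1)/2$—are unchanged. Consequently $\mathcal{R}_{U_l\ot I_m}=k_l E_l T_l=\mathcal{R}_{U_l}$, and summing over $l$ yields $\mathcal{R}(U\ot I_m)=\mathcal{R}(U)$.

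With this invariance in hand, I would apply Theorem~\ref{thm:exp_cost} to the $(n+m)$-qubit circuit $U\ot I_m$, obtaining $\mathcal{R}(U\ot I_m)\geq \tfrac{1}{2}C_{W_1}(U\ot I_m)$. Combining this with the cost invariance gives $\mathcal{R}(U)\geq \tfrac{1}{2}C_{W_1}(U\ot I_m)$ for every $m$. Taking the supremum over $m\in\mathbb{Z}_{\geq 0}$ and invoking the definition of $\AC_{W_1}$ then produces the desired bound $\mathcal{R}(U)\geq \tfrac{1}{2}\AC_{W_1}(U)$.

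The main obstacle is the verification in the second paragraph that the experimental cost is genuinely unaffected by tensoring with ancilla identities—specifically, that both the locality count $k_l$ and the seminorm $E_l$ are invariant under the replacement $H_l\mapsto H_l\ot I_m$. Everything else is a routine application of the already-proved theorem and the elementary fact that a supremum of lower bounds is itself a lower bound. As a final sanity check I would confirm the edge case $m=0$, where the statement collapses exactly to Theorem~\ref{thm:exp_cost}, ensuring the supremum is taken over a nonempty family; together with the bound $C_{W_1}(\Lambda)\leq n$ noted earlier, this also guarantees $\AC_{W_1}(U)$ is well defined.
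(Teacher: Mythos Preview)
Your proposal is correct and follows essentially the same approach as the paper, which simply remarks that the proof of Theorem~\ref{thm:exp_cost} ``also works for the case where we have $m$ qubits as ancillas.'' You have made this remark precise by packaging it as cost invariance $\mathcal{R}(U\ot I_m)=\mathcal{R}(U)$ followed by a black-box application of the theorem to the enlarged circuit, which is exactly the content of the paper's one-line justification.
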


\begin{example}\rm
Consider the CNOT gate $\exp(iH\pi/2)$, where 
$H=\frac{1}{2}(I-Z_1-X_2+Z_1 X_2)$. The experimental cost of CNOT is $\mathcal{R}(CNOT)=\pi$, and thus  the quantum circuit 
$U=\prod_iCNOT_{i,i+1}$, which can be used to generate the cat state $(\ket{0}^{\ot n}+\ket{1}^{\ot n})/\sqrt{2}$, has the 
experimental cost $\mathcal{R}(U)=\pi n$. The quantum Wasserstein complexity is $C_{W_1}(U)=\Theta(n)$. (See Table~\ref{tab:examples}.) Thus, the quantum circuit $U=\prod_iCNOT_{i,i+1}$ has an experimental cost which is equivalent to the quantum Wasserstein complexity, up to some constant factor. Also relation \eqref{eq:main_exp} holds.

\end{example}

\section{Conclusion}
We introduce a new measure of quantum complexity, which we call \textit{quantum Wasserstein complexity}. We provide a  connection between this complexity and the circuit cost. This provides a useful lower bound 
for the circuit complexity of the shallow (or constant-depth) quantum circuits, in terms of the quantum Wasserstein complexity.
We show that the quantum Wasserstein complexity provides a lower bound for the experimental 
cost of implementing  quantum circuits.

Our results provide an application and operational interpretation of the quantum Wasserstein 
distance for shallow quantum circuits. 
This raises the interesting  question: can one find a better bound for the circuit complexity of  deep quantum circuits based on the quantum Wasserstein distance? To compute the quantum Wasserstein complexity, one needs to maximize over all pure states, making it difficult to compute. So one naturally asks: can one discover an efficient way to approximate the quantum Wasserstein complexity? It is also interesting to study quantum Wasserstein distances of order other than 1, and how they relate to circuit complexity in quantum computation.





\begin{acknowledgments}
We thank  Michael Freedman,  Chi-Ning Chou and Xun Gao for discussion of  quantum circuit complexity;
we thank Marius Junge, Weichen Gu and Hao Zhong for discussion of the quantum Wasserstein distance.
This work was supported in part by the ARO Grant W911NF-19-1-0302 and the ARO MURI Grant W911NF-20-1-0082.
\end{acknowledgments}

 \bibliography{cost-lit}
\clearpage
\newpage

\appendix
\widetext

\section{Basic properties of quantum Wasserstein norm of order 1}

We state three basic properties of the quantum Wasserstein norm of order 1~\cite{PalmaIEEE2021}, which is defined for all traceless, Hermitian operators $A$:
\begin{eqnarray}
\nonumber\norm{A}_{W_1}
:=\frac{1}{2} \min_{\set{X_i}}
\left\{\sum_i\norm{X_i}_1: A=\sum_i X_i,\ X_i\in O(\mathcal{H}),\ \Ptr{i}{X_i}=0 \ \forall i \right\}\;.
\label{eq:def_W1_norm}
\end{eqnarray}

\begin{lem}[De Palma et al.~\cite{PalmaIEEE2021}]  \label{lem:ten}(Tensorization)
Given an $n$-qudit traceless, Hermitian  operator $A$, and a subset $S\subset [n]$ with complement $S^c$, 
\begin{eqnarray}
\norm{A}_{W_1}
\geq 
\norm{\Ptr{S}{A}}_{W_1}
+\norm{\Ptr{S^c}{A}}_{W_1}\;.
\end{eqnarray}
\end{lem}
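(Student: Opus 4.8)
The plan is to exploit the variational (minimization) form of the $W_1$ norm directly. Fix an optimal decomposition $A=\sum_{i\in[n]}X_i$ attaining $\norm{A}_{W_1}$, so that each $X_i\in O(\cH)$ satisfies $\Ptr{i}{X_i}=0$ and $\norm{A}_{W_1}=\frac12\sum_i\norm{X_i}_1$. Since each localizing constraint is tagged by a single subsystem index $i\in[n]$, I would split the sum according to whether $i\in S$ or $i\in S^c$, and then push the two partial traces $\Ptr{S^c}{\cdot}$ and $\Ptr{S}{\cdot}$ through this decomposition to manufacture feasible decompositions of $\Ptr{S^c}{A}$ and $\Ptr{S}{A}$ respectively.

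The crucial observation is that tracing out subsystem $i$ annihilates $X_i$. Hence for $i\in S^c$ one has $\Ptr{S^c}{X_i}=\Ptr{S^c\setminus\{i\}}{\Ptr{i}{X_i}}=0$, so that $\Ptr{S^c}{A}=\sum_{i\in S}\Ptr{S^c}{X_i}$, an operator supported on $S$. I would then check that $\{\Ptr{S^c}{X_i}\}_{i\in S}$ is a legitimate competitor in the minimization defining $\norm{\Ptr{S^c}{A}}_{W_1}$ over the subsystem $S$: each term is traceless and Hermitian (both preserved by partial trace), and for $i\in S$ the localizing constraint survives because $\Ptr{i}{\cdot}$ commutes with $\Ptr{S^c}{\cdot}$, giving $\Ptr{i}{\Ptr{S^c}{X_i}}=\Ptr{S^c}{\Ptr{i}{X_i}}=0$. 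Feasibility then yields $\norm{\Ptr{S^c}{A}}_{W_1}\leq\frac12\sum_{i\in S}\norm{\Ptr{S^c}{X_i}}_1$, and invoking the contractivity of the trace norm under partial trace, $\norm{\Ptr{S^c}{X_i}}_1\leq\norm{X_i}_1$, upgrades this to $\norm{\Ptr{S^c}{A}}_{W_1}\leq\frac12\sum_{i\in S}\norm{X_i}_1$.

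By the symmetric argument with the roles of $S$ and $S^c$ interchanged, $\norm{\Ptr{S}{A}}_{W_1}\leq\frac12\sum_{i\in S^c}\norm{X_i}_1$. Adding the two bounds, the two index sets recombine into all of $[n]$, so that $\norm{\Ptr{S}{A}}_{W_1}+\norm{\Ptr{S^c}{A}}_{W_1}\leq\frac12\sum_{i\in[n]}\norm{X_i}_1=\norm{A}_{W_1}$, which is exactly the claim. The only genuinely delicate point, and the step I would write out most carefully, is the bookkeeping in verifying feasibility of the candidate decompositions: one must confirm that the single-subsystem localizing constraints are either preserved or trivially vanish under the relevant partial traces, which hinges on the fact that partial traces over disjoint sets of subsystems commute. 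Everything else reduces to the standard facts that partial trace preserves trace and Hermiticity and is a contraction in trace norm.
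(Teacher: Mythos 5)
Your proof is correct. The paper itself gives no proof of this lemma---it is imported verbatim from De Palma et al.~\cite{PalmaIEEE2021}---and your argument (take an optimal decomposition $A=\sum_i X_i$, split the index set into $S$ and $S^c$, observe that $\Ptr{S^c}{X_i}=0$ for $i\in S^c$ so the surviving pieces form a feasible decomposition of $\Ptr{S^c}{A}$, then apply trace-norm contractivity of the partial trace and sum the two symmetric bounds) is exactly the standard proof from that reference, so there is nothing to fix.
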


\begin{lem}[De Palma et al.~\cite{PalmaIEEE2021}] \label{lem:boun_1norm}
Given an $n$-qudit traceless, Hermitian operator $A$ and a subset $S\subset[n]$ with $\Ptr{S}{A}=0$,
\begin{eqnarray}
\norm{A}_{W_1}
\leq |S|
\frac{d^2-1}{d^2}
\norm{A}_1.
\end{eqnarray}
Hence, for any $n$-qudit quantum channel $\Lambda$ acting nontrivially on $k$-qudit subsystems, 
\begin{eqnarray}
\norm{\rho-\Lambda(\rho)}_{W_1}
\leq k.
\end{eqnarray}
\end{lem}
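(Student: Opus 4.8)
The plan is to prove the displayed inequality by exhibiting an explicit feasible decomposition of $A$ for the minimization defining $\norm{A}_{W_1}$, and then to read off the consequence by specializing $A=\rho-\Lambda(\rho)$. Since $\norm{\cdot}_{W_1}$ is an infimum over decompositions $A=\sum_i X_i$ with $X_i\in O(\mathcal H)$ and $\Ptr{i}{X_i}=0$, any concrete admissible choice of $\{X_i\}$ already yields the upper bound $\norm{A}_{W_1}\le\frac12\sum_i\norm{X_i}_1$; the task is to construct a good one that lives only on the sites of $S$ and whose cost telescopes.

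First I would introduce, for each site $i$, the single-site fully depolarizing map $\mathcal{T}_i(B):=\frac{I_i}{d}\otimes\Ptr{i}{B}$, and record its three relevant features: it is completely positive and trace-preserving (hence $\norm{\mathcal{T}_i(B)}_1\le\norm{B}_1$), it preserves Hermiticity and tracelessness, and $\Ptr{i}{B-\mathcal{T}_i(B)}=0$ because $\Ptr{i}{\mathcal{T}_i(B)}=\Ptr{i}{B}$. Writing $S=\{s_1,\dots,s_m\}$ with $m=|S|$, I would then telescope: put $A_0=A$ and $A_j=\mathcal{T}_{s_j}(A_{j-1})$, so that $A=\sum_{j=1}^m X_j+A_m$ with $X_j:=A_{j-1}-A_j$. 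By construction $\Ptr{s_j}{X_j}=0$, so each $X_j$ is admissible at site $s_j$; and because the maps $\mathcal{T}_{s_j}$ act on distinct sites and commute, $A_m=\frac{I_S}{d^{m}}\otimes\Ptr{S}{A}=0$ by the hypothesis $\Ptr{S}{A}=0$. This produces the feasible decomposition $A=\sum_{j=1}^m X_j$ and hence $\norm{A}_{W_1}\le\frac12\sum_{j=1}^m\norm{X_j}_1$.

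The crux, and where I expect the real work, is to bound each term by $\norm{X_j}_1\le 2\frac{d^2-1}{d^2}\norm{A}_1$. The key idea is that $\mathcal{T}_i$ is exactly the uniform twirl over the $d^2$ Heisenberg--Weyl operators $W_0=I,W_1,\dots,W_{d^2-1}$ on site $i$, that is $\mathcal{T}_i(B)=\frac1{d^2}\sum_{a=0}^{d^2-1}W_aBW_a^\dagger$. Isolating the $a=0$ term shows $B-\mathcal{T}_i(B)=\frac1{d^2}\sum_{a=1}^{d^2-1}(B-W_aBW_a^\dagger)$, so $X_j=\frac1{d^2}\sum_{a=1}^{d^2-1}(A_{j-1}-W_aA_{j-1}W_a^\dagger)$. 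The triangle inequality together with the unitary-invariance estimate $\norm{B-W_aBW_a^\dagger}_1\le 2\norm{B}_1$ gives $\norm{X_j}_1\le 2\frac{d^2-1}{d^2}\norm{A_{j-1}}_1$, and the contractivity $\norm{A_{j-1}}_1\le\norm{A}_1$ closes the per-term bound. Summing over $j$ then yields $\norm{A}_{W_1}\le m\frac{d^2-1}{d^2}\norm{A}_1=|S|\frac{d^2-1}{d^2}\norm{A}_1$. The one subtle point I would verify carefully is the twirl identity itself, namely that averaging a single qudit over the Weyl group reproduces the partial-trace-and-maximally-mix map, since the entire constant $\frac{d^2-1}{d^2}$ hinges on it.

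Finally, for the consequence I would set $A=\rho-\Lambda(\rho)$. Because $\Lambda$ acts nontrivially only on a $k$-qudit subsystem $S$, we may write $\Lambda=\Lambda_S\otimes\mathrm{id}_{S^c}$, and trace-preservation of $\Lambda_S$ forces $\Ptr{S}{\rho-\Lambda(\rho)}=\Ptr{S}{\rho}-\Ptr{S}{\rho}=0$. Thus the hypothesis of the inequality holds with $|S|=k$, which immediately bounds $\norm{\rho-\Lambda(\rho)}_{W_1}$ by a multiple of $k$. The only delicate point is the numerical constant: the crude substitution $\norm{\rho-\Lambda(\rho)}_1\le 2$ yields $2k\frac{d^2-1}{d^2}$ rather than $k$, so to reach exactly $k$ one should exploit that $\rho$ and $\Lambda(\rho)$ agree after tracing out $S$ more sharply. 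The clean route is the dual (Lipschitz-constant) characterization of $\norm{\cdot}_{W_1}$: agreement on $S^c$ lets one subtract any observable supported off $S$ from the maximizing Hamiltonian and bound the resulting pairing site-by-site over $S$, each site contributing at most $1$.
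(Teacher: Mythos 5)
First, for calibration: the paper itself does not prove this lemma --- it is quoted from De Palma et al.~\cite{PalmaIEEE2021} --- so your proposal must stand on its own. For the first display, it does. The telescoped depolarizations $\mathcal{T}_{s_j}$ give a feasible decomposition $A=\sum_{j=1}^{|S|}X_j$ with $X_j$ assigned to site $s_j$, $\Ptr{s_j}{X_j}=0$, and $A_{|S|}=\frac{I_S}{d^{|S|}}\ot\Ptr{S}{A}=0$; the Heisenberg--Weyl twirl identity you single out for verification is the standard unitary $1$-design fact (the paper itself uses it in Appendix C, Example 1); and the chain $\norm{X_j}_1\le\frac{2(d^2-1)}{d^2}\norm{A_{j-1}}_1\le\frac{2(d^2-1)}{d^2}\norm{A}_1$ is correct, giving $\norm{A}_{W_1}\le|S|\frac{d^2-1}{d^2}\norm{A}_1$. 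This part is complete and is essentially the standard argument.

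The second display is where there is a genuine gap --- one you flag honestly but do not close. You are right that substituting $\norm{\rho-\Lambda(\rho)}_1\le2$ gives only $2k\frac{d^2-1}{d^2}\ge\frac{3}{2}k$, so the lemma's ``Hence'' is not a one-line implication (in \cite{PalmaIEEE2021} the bound $\le k$ is a separate proposition with its own proof). The problem is that your dual-route repair is circular as sketched. Pairing against $H$ with $\norm{H}_L\le1$ and subtracting $I_S\ot\tilde{H}$ bounds the pairing by $\norm{H-I_S\ot\tilde{H}}_\infty\,\norm{\rho-\Lambda(\rho)}_1\le2\,\norm{H-I_S\ot\tilde{H}}_\infty$, so you need $\min_{\tilde{H}}\norm{H-I_S\ot\tilde{H}}_\infty\le k/2$. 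The site-by-site estimate you gesture at delivers only $k$: with the normalization $\norm{H}_L=2\max_i\min_{H^{(i)}}\norm{H-I_i\ot H^{(i)}}_\infty$ of \cite{PalmaIEEE2021}, one has $\norm{H-\mathcal{T}_{s_j}(H)}_\infty\le2\min_{H^{(s_j)}}\norm{H-I_{s_j}\ot H^{(s_j)}}_\infty\le\norm{H}_L$, and telescoping over the sites of $S$ (the $\mathcal{T}_i$ are unital and positive, hence operator-norm contractions) yields $\min_{\tilde{H}}\norm{H-I_S\ot\tilde{H}}_\infty\le k$, whence the pairing is only $\le 2k$ --- no better than your primal bound. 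Improving ``at most $2$ per site'' to ``at most $1$ per site'' means proving $\min_{\tilde{H}}\norm{H-I_S\ot\tilde{H}}_\infty\le\frac{k}{2}\norm{H}_L$, and by the very $W_1$/Lipschitz duality you invoke, that inequality is \emph{equivalent} to the statement $\norm{A}_{W_1}\le\frac{|S|}{2}\norm{A}_1$ for all traceless Hermitian $A$ with $\Ptr{S}{A}=0$, which implies both displays of the lemma. In other words, the sketch assumes exactly what has to be proved; the per-site gain of the factor $2$ is the missing idea, not a routine verification.

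For completeness, the other natural repair --- a chain of states $\rho=\tau_0,\tau_1,\dots,\tau_k=\Lambda(\rho)$ in which consecutive states coincide after discarding a single site of $S$, so that each step has $W_1$-cost at most $1$ --- is also not routine: the intermediate $\tau_j$ must realize prescribed marginals on overlapping regions, which is a quantum marginal problem, and whether such states exist can depend on the order in which the sites of $S$ are processed (for $\Lambda=\mathrm{SWAP}_{12}$ acting on $\rho=\proj{0}_1\ot\Phi_{23}$ with $\Phi$ a Bell state, one processing order is blocked by monogamy of entanglement while the other succeeds). The bound $\le k$ therefore requires the dedicated argument of \cite{PalmaIEEE2021}; it cannot be extracted from the trace-norm estimates in your proposal.
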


\begin{lem}[De Palma et al.~\cite{PalmaIEEE2021}]
Given an $n$-qudit traceless, Hermitian operator $A$, we have 
\begin{eqnarray}
\frac{1}{2}\norm{A}_1
\leq \norm{A}_{W_1}
\leq \frac{n}{2}\norm{A}_1.
\end{eqnarray}
Moreover, if there exists some $i$ such that $\Ptr{i}{A}=0$, then we have 
\begin{eqnarray}
\norm{A}_{W_1}
=\frac{1}{2}\norm{A}_1.
\end{eqnarray}
\end{lem}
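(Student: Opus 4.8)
The plan is to prove the two inequalities separately and then read off the equality case, the only subtle point being the precise constant $n/2$ in the upper bound.

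\emph{Lower bound.} The inequality $\tfrac12\norm{A}_1 \le \norm{A}_{W_1}$ is immediate from the triangle inequality for the trace norm. For any admissible decomposition $A=\sum_i X_i$ with $X_i\in O(\mathcal{H})$ and $\Ptr{i}{X_i}=0$, one has $\norm{A}_1=\norm{\sum_i X_i}_1\le \sum_i\norm{X_i}_1$; minimizing the right-hand side over all such decompositions gives $\norm{A}_1\le 2\norm{A}_{W_1}$.

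\emph{Upper bound.} First I would reduce to states: writing $A=A_+-A_-$ for the positive and negative parts and setting $t=\tfrac12\norm{A}_1$, $\rho=A_+/t$, $\sigma=A_-/t$ (all density operators, since $\Tr{A}=0$), absolute homogeneity of $\norm{\cdot}_{W_1}$ turns the claim $\norm{A}_{W_1}\le\tfrac n2\norm{A}_1$ into the statement $\norm{\rho-\sigma}_{W_1}\le n$ for arbitrary states. I would then exhibit an explicit admissible decomposition of $\rho-\sigma$ along a telescoping path through product states. For $0\le i\le n$ set $\omega_i:=\sigma_{[1,i]}\ot\rho_{[i+1,n]}$, where $\sigma_{[1,i]}$ and $\rho_{[i+1,n]}$ are the marginals of $\sigma$ and $\rho$ on the first $i$ and last $n-i$ qudits, so that $\omega_0=\rho$ and $\omega_n=\sigma$. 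Each $\omega_i$ is a state, and a one-line computation shows $\Ptr{i}{\omega_{i-1}}=\sigma_{[1,i-1]}\ot\rho_{[i+1,n]}=\Ptr{i}{\omega_i}$: at step $i$ the $i$-th qudit sits in the $\rho$-block of $\omega_{i-1}$ but in the $\sigma$-block of $\omega_i$, and tracing it out leaves the same common marginal in both cases. Hence $X_i:=\omega_{i-1}-\omega_i$ lies in $O(\mathcal{H})$, obeys $\Ptr{i}{X_i}=0$, and $\sum_{i=1}^n X_i=\rho-\sigma$. Since $\omega_{i-1},\omega_i$ are states, $\norm{X_i}_1\le 2$, so $\norm{\rho-\sigma}_{W_1}\le\tfrac12\sum_i\norm{X_i}_1\le n$.

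\emph{Equality case and main obstacle.} The equality $\norm{A}_{W_1}=\tfrac12\norm{A}_1$ when $\Ptr{i}{A}=0$ for some $i$ follows by combining the lower bound with the trivial one-term decomposition $X_i=A$ and $X_j=0$ for $j\neq i$, which is admissible precisely because $\Ptr{i}{A}=0$, giving $\norm{A}_{W_1}\le\tfrac12\norm{A}_1$. The hard part is pinning down the constant in the upper bound: the naive telescoping that depolarizes one qudit at a time, or that builds $\rho$ and $\sigma$ up from the maximally mixed state separately, loses a factor of two (and Lemma~\ref{lem:boun_1norm} applied with $S=[n]$ yields only the weaker bound $n\tfrac{d^2-1}{d^2}\norm{A}_1$). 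The key idea that removes the overhead is to interpolate \emph{directly} between $\rho$ and $\sigma$ along the single product path $\omega_i$, so that each increment is a difference of two genuine states---hence of trace norm at most $2$---and is automatically invariant on the $i$-th marginal; this is exactly what makes $\sum_i\norm{X_i}_1\le 2n$ tight.
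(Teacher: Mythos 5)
Your proof is correct: the triangle-inequality lower bound, the Hahn--Jordan reduction to a pair of states followed by the telescoping product-marginal path $\omega_i=\sigma_{[1,i]}\ot\rho_{[i+1,n]}$ (each increment $\omega_{i-1}-\omega_i$ being a difference of two states with vanishing $i$-th partial trace, hence of trace norm at most $2$), and the one-term decomposition $X_i=A$ for the equality case are all sound and fit the definition of $\norm{\cdot}_{W_1}$ exactly. Note that the paper itself gives no proof to compare against---this lemma is imported directly from De Palma et al.~\cite{PalmaIEEE2021}---and your argument is essentially the standard one from that reference, so it stands as a complete, self-contained justification.
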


\section{Basic properties of quantum Wasserstein complexity}\label{Appen:basic}
\begin{prop}[Restatement of Proposition~\ref{prop:bas_1}]
The quantum Wasserstein complexity $C_{W_1}$ satisfies the following properties:

\begin{enumerate}[(1)]
\item \label{item:restatement_faithfulness}
Faithfulness: $C_{W_1}(\Lambda)=0$ if and only if $\Lambda$ is the identity map;

\item \label{item:restatement_convexity}
Convexity: $C_{W_1}\left(\sum_ip_i\Lambda_i\right)\leq \sum_ip_iC_{W_1}(\Lambda_i)$, where 
$p_i\geq 0$ and $\sum_ip_i=1$;

\item \label{item:restatement_subadditivity}
Subadditivity under concatenation:  $C_{W_1}(\Lambda_1\circ \Lambda_2)\leq C_{W_1}(\Lambda_1)+C_{W_1}(\Lambda_2)$;

\item \label{item:restatement_tensor_sub}
$C_{W_1}(\Lambda_1\ot \Lambda_2)\leq C_{W_1}(\Lambda_1\ot I)+C_{W_1}(\Lambda_2\ot I)$;

\item \label{item:restatement_tensor_sup}
Superadditivity under tensorization: $C_{W_1}(\Lambda_1\ot \Lambda_2)\geq C_{W_1}(\Lambda_1)+C_{W_1}(\Lambda_2)$;

\item \label{item:restatement_unitary}
For a unitary channel $U$, $C_{W_1}(U)=C_{W_1}(U^\dag)$;

\item \label{item:restatement_nontrivially}
If $\Lambda$ acts nontrivially on a $k$-qudit subsystem, then we have 
$C_{W_1}(\Lambda)\leq k$.

\end{enumerate}
\end{prop}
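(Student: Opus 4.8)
The plan is to derive every item from two facts: that $\norm{\cdot}_{W_1}$ is a genuine norm on the traceless Hermitian operators (hence homogeneous, subadditive, and faithful), together with the structural lemmas on the $W_1$ norm recalled above. Since $C_{W_1}(\Lambda)=\max_{\ket{\psi}}\norm{\proj{\psi}-\Lambda(\proj{\psi})}_{W_1}$ is a maximum of $W_1$ norms, most of the listed properties should transfer directly from the corresponding norm properties once the optimization is reduced to pure inputs.

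For faithfulness (item~\ref{item:restatement_faithfulness}) one direction is immediate, since $\Lambda=I$ gives $\rho-\Lambda(\rho)=0$. For the converse I would invoke the lower bound $\tfrac12\norm{A}_1\le\norm{A}_{W_1}$: if $C_{W_1}(\Lambda)=0$ then $\norm{\rho-\Lambda(\rho)}_1=0$ for every $\rho$, forcing $\Lambda(\rho)=\rho$ on all states and hence $\Lambda=I$. For convexity (item~\ref{item:restatement_convexity}) and subadditivity under concatenation (item~\ref{item:restatement_subadditivity}), I would fix an arbitrary input $\rho$ and rewrite the difference operator, using $\sum_i p_i=1$, as $\rho-\sum_i p_i\Lambda_i(\rho)=\sum_i p_i\pa{\rho-\Lambda_i(\rho)}$ in the first case and telescopically as $\rho-\Lambda_1(\Lambda_2(\rho))=\pa{\rho-\Lambda_2(\rho)}+\pa{\Lambda_2(\rho)-\Lambda_1(\Lambda_2(\rho))}$ in the second; the triangle inequality and homogeneity of $\norm{\cdot}_{W_1}$ then bound each summand by the relevant complexity (for concatenation one uses that $\Lambda_2(\rho)$ is again a valid state), and taking the maximum over $\rho$ finishes both.

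Items~\ref{item:restatement_tensor_sub}, \ref{item:restatement_unitary}, and \ref{item:restatement_nontrivially} are short corollaries. For item~\ref{item:restatement_tensor_sub} I would factor $\Lambda_1\ot\Lambda_2=(\Lambda_1\ot I)\circ(I\ot\Lambda_2)$ and apply item~\ref{item:restatement_subadditivity}, using the permutation invariance of the $W_1$ norm to identify $C_{W_1}(I\ot\Lambda_2)=C_{W_1}(\Lambda_2\ot I)$. For the unitary case (item~\ref{item:restatement_unitary}), the bijective substitution $\tau=U\rho U^\dag$ sends $\rho-U\rho U^\dag$ to $-\pa{\tau-U^\dag\tau U}$; since the norm is insensitive to an overall sign and $\tau$ ranges over all states as $\rho$ does, the two maxima coincide. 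Item~\ref{item:restatement_nontrivially} is immediate from the bound $\norm{\rho-\Lambda(\rho)}_{W_1}\le k$ recalled in Lemma~\ref{lem:boun_1norm}.

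The substantive item is superadditivity (item~\ref{item:restatement_tensor_sup}), and here I would use the tensorization inequality of Lemma~\ref{lem:ten}. Let $\ket{\psi_1},\ket{\psi_2}$ be pure states achieving $C_{W_1}(\Lambda_1)$ and $C_{W_1}(\Lambda_2)$, and feed the product $\ket{\psi_1}\ot\ket{\psi_2}$ into $\Lambda_1\ot\Lambda_2$. Writing $A=\proj{\psi_1}\ot\proj{\psi_2}-\Lambda_1(\proj{\psi_1})\ot\Lambda_2(\proj{\psi_2})$, the key point is that tracing out one factor collapses the product structure: because each channel is trace preserving, $\Ptr{2}{A}=\proj{\psi_1}-\Lambda_1(\proj{\psi_1})$ and $\Ptr{1}{A}=\proj{\psi_2}-\Lambda_2(\proj{\psi_2})$. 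Taking $S$ to be the first factor in Lemma~\ref{lem:ten} then yields $C_{W_1}(\Lambda_1\ot\Lambda_2)\ge\norm{A}_{W_1}\ge\norm{\Ptr{1}{A}}_{W_1}+\norm{\Ptr{2}{A}}_{W_1}=C_{W_1}(\Lambda_1)+C_{W_1}(\Lambda_2)$. The main obstacle is recognizing that the product test state is the right choice and verifying that its partial traces cleanly recover the single-channel difference operators; this is precisely what lets the superadditive lower bound reproduce the separate complexities, while the remaining items are bookkeeping with the norm axioms.
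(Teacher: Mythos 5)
Your proposal is correct and follows essentially the same route as the paper: items (1)--(4), (6), (7) reduce to the norm axioms of $\norm{\cdot}_{W_1}$, the definition, and Lemma~\ref{lem:boun_1norm}, while superadditivity (5) is obtained exactly as in the paper from the tensorization inequality of Lemma~\ref{lem:ten} applied to the difference operator, whose partial traces collapse to the single-channel differences by trace preservation (the paper states this for the reduced states of a general input, which amounts to your choice of a product of optimizers). Your write-up simply makes explicit the details the paper leaves as ``follows directly.''
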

\begin{proof}
Property \ref{item:restatement_faithfulness} follows directly from the faithfulness of the quantum $W_1$ norm, properties \ref{item:restatement_convexity}--\ref{item:restatement_tensor_sub} follow directly from the 
triangle inequality of the quantum $W_1$ norm, and property \ref{item:restatement_tensor_sup} follows from the tensorization of
$W_1$ norm (see Lemma  \ref{lem:ten}) as follows
\begin{eqnarray*}
\norm{\rho-\Lambda_1\ot\Lambda_2(\rho)}_{W_1}
\geq \norm{\rho_1-\Lambda_1(\rho_1)}_{W_1}
+\norm{\rho_2-\Lambda_2(\rho_2)}_{W_1},
\end{eqnarray*}
where $\rho_i$ (for $i=1,2$)  denotes the corresponding reduced state.
Property \ref{item:restatement_unitary} follows directly from the definition of the quantum Wasserstein complexity. Finally, Property \ref{item:restatement_nontrivially}
follows from Lemma~\ref{lem:boun_1norm}.

\end{proof}

%
%
%
%
%
%
%
%
%

\begin{prop}[Restatement of Proposition~\ref{prop:bas_2}]
The correlation-assisted Wasserstein complexity $\AC_{W_1}(\Lambda)$ 
satisfies the following properties:
\begin{enumerate}[(1)]

\item \label{item:faithfulness}
Faithfulness: $\AC_{W_1}(\Lambda)=0$ if and only if $\Lambda$ is the identity map;

\item \label{item:convexity}
Convexity: $\AC_{W_1}(\sum_ip_i\Lambda_i)\leq \sum_ip_i\AC_{W_1}(\Lambda_i)$, where 
$p_i\geq 0$ and $\sum_ip_i=1$;

\item \label{item:Subadditivity_under_concatenation}
Subadditivity under concatenation:  $\AC_{W_1}(\Lambda_1\circ \Lambda_2)\leq \AC_{W_1}(\Lambda_1)+\AC_{W_1}(\Lambda_2)$;

\item \label{item:additivity_under_tensorization} 
Additivity under tensorization: $\AC_{W_1}(\Lambda_1\ot \Lambda_2)=\AC_{W_1}(\Lambda_1)+\AC_{W_1}(\Lambda_2)$.
\end{enumerate}
\end{prop}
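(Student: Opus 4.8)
The plan is to obtain each property of $\AC_{W_1}$ as the ``supremum-over-ancilla'' shadow of the corresponding property of $C_{W_1}$ already established in Proposition~\ref{prop:bas_1}. Two structural facts make this transfer work, and I would record them first: the map $m\mapsto C_{W_1}(\Lambda\ot I_m)$ is nondecreasing and bounded above by the number of qudits, so the supremum defining $\AC_{W_1}$ is a genuine limit; and the $W_1$ norm is invariant under permutations of the qudit subsystems, so $C_{W_1}$ is unchanged if ancillary identity factors are shuffled among the tensor slots. I will use the latter repeatedly to re-group ancillas without further comment.

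Faithfulness, convexity, and concatenation subadditivity then lift termwise. If $\Lambda=I$ then $\Lambda\ot I_m=I$ for every $m$, so each term vanishes by faithfulness of $C_{W_1}$; conversely $\AC_{W_1}(\Lambda)=0$ forces the $m=0$ term $C_{W_1}(\Lambda)$ to vanish, whence $\Lambda=I$. For the other two I would fix $m$, use the channel identities $(\sum_i p_i\Lambda_i)\ot I_m=\sum_i p_i(\Lambda_i\ot I_m)$ and $(\Lambda_1\circ\Lambda_2)\ot I_m=(\Lambda_1\ot I_m)\circ(\Lambda_2\ot I_m)$, apply convexity respectively concatenation subadditivity of $C_{W_1}$ to the right-hand sides, dominate each $C_{W_1}(\Lambda_i\ot I_m)$ by $\AC_{W_1}(\Lambda_i)$, and finally take the supremum over $m$.

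The substance is additivity, which I would prove by two opposite one-sided bounds. Writing $\Lambda_1,\Lambda_2$ on $n_1,n_2$ qudits, for the upper bound I apply the subadditive-tensorization property of $C_{W_1}$ (property~\ref{item:restatement_tensor_sub} of Proposition~\ref{prop:bas_1}) to $\Lambda_1$ and $\Lambda_2\ot I_m$, giving $C_{W_1}(\Lambda_1\ot\Lambda_2\ot I_m)\le C_{W_1}(\Lambda_1\ot I_{n_2+m})+C_{W_1}(\Lambda_2\ot I_{n_1+m})$; each summand is at most the corresponding $\AC_{W_1}$, and taking the supremum over $m$ yields ``$\le$''. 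For the lower bound I introduce two independent ancillas and view $\Lambda_1\ot\Lambda_2\ot I_{m_1+m_2}$ as $(\Lambda_1\ot I_{m_1})\ot(\Lambda_2\ot I_{m_2})$; superadditivity under tensorization (property~\ref{item:restatement_tensor_sup}) then gives $C_{W_1}(\Lambda_1\ot\Lambda_2\ot I_{m_1+m_2})\ge C_{W_1}(\Lambda_1\ot I_{m_1})+C_{W_1}(\Lambda_2\ot I_{m_2})$. The left-hand side is at most $\AC_{W_1}(\Lambda_1\ot\Lambda_2)$, and since the right-hand side splits into two summands depending on the separate indices $m_1,m_2$, taking the supremum over each index independently produces $\AC_{W_1}(\Lambda_1)+\AC_{W_1}(\Lambda_2)$.

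I expect the main obstacle to be exactly this additivity step, and in particular the reason $\AC_{W_1}$ upgrades the mere superadditivity of $C_{W_1}$ into an equality: one must feed each factor its own independent block of ancilla and then observe that the supremum of $C_{W_1}(\Lambda_1\ot I_{m_1})+C_{W_1}(\Lambda_2\ot I_{m_2})$ over $(m_1,m_2)$ factorizes into the sum of the two separate suprema. The matching upper bound is what prevents the sum from overshooting, and it is there that the permutation invariance of $W_1$ and the availability of the subadditive-tensorization property of $C_{W_1}$ do the real work.
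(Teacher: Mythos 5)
Your proof is correct and follows essentially the same route as the paper's: properties (1)--(3) are lifted termwise from the corresponding properties of $C_{W_1}$, and additivity is obtained by combining the sub-tensorization bound $C_{W_1}(\Lambda_1\ot \Lambda_2)\leq C_{W_1}(\Lambda_1\ot I)+C_{W_1}(\Lambda_2\ot I)$ (for ``$\leq$'') with superadditivity of $C_{W_1}$ applied to $(\Lambda_1\ot I_{m_1})\ot(\Lambda_2\ot I_{m_2})$ (for ``$\geq$''). The only difference is presentational: you make explicit the ancilla bookkeeping, permutation invariance, and the factorization of the supremum over $(m_1,m_2)$, which the paper's terse proof leaves implicit.
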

\begin{proof}
The property \ref{item:faithfulness} follows from the faithfulness of the  quantum $W_1$ distance. Properties \ref{item:convexity} and \ref{item:Subadditivity_under_concatenation} follow from the 
triangle inequality of the quantum $W_1$ distance. Hence, it suffices for us to prove the property \ref{item:additivity_under_tensorization}.
Since 
\begin{eqnarray}
C_{W_1}(\Lambda_1\ot \Lambda_2)
\leq C_{W_1}(\Lambda_1\ot I)
+C_{W_1}(\Lambda_2\ot I),
\end{eqnarray}
by the definition of $EW_1$, we have 
\begin{eqnarray}
\AC_{W_1}(\Lambda_1\ot \Lambda_2)
\leq \AC_{W_1}(\Lambda_1)
+\AC_{W_1}(\Lambda_2).
\end{eqnarray}
Besides, since 
\begin{eqnarray}
C_{W_1}(\Lambda_1\ot \Lambda_2)
\geq C_{W_1}(\Lambda_1)
+C_{W_1}(\Lambda_2),
\end{eqnarray}
we have
\begin{eqnarray}
\AC_{W_1}(\Lambda_1\ot \Lambda_2)
\geq \AC_{W_1}(\Lambda_1)
+\AC_{W_1}(\Lambda_2).
\end{eqnarray}
\end{proof}

\section{Estimation of \texorpdfstring{$C_{W_1}$}{CW1} and \texorpdfstring{$\AC_{W_1}$}{ACW1}: interesting examples}\label{appen:exam}

Let us start with the simplest case, where $U$ is a single-qudit unitary.

\begin{claim}
For a 1-qudit unitary $U$, we have 
\begin{eqnarray}
C_{W_1}(U)=\AC_{W_1}(U).
\end{eqnarray}
\end{claim}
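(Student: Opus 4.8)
The plan is to prove the stronger statement that $C_{W_1}(U\ot I_m)=C_{W_1}(U)$ for every $m\in\mathbb{Z}_{\geq 0}$, from which $\AC_{W_1}(U)=C_{W_1}(U)$ follows immediately upon taking the supremum over $m$. Since $\AC_{W_1}(U)\geq C_{W_1}(U)$ holds by definition (take $m=0$), the real content is the reverse inequality $C_{W_1}(U\ot I_m)\leq C_{W_1}(U)$, i.e.\ that adjoining an ancilla cannot help a single-qudit unitary.

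First I would reduce the quantum $W_1$ complexity to an ordinary trace-distance optimization. Fix a state $\rho$ on the $(1+m)$-qudit system and set $A=\rho-(U\ot I_m)\rho(U^\dag\ot I_m)$, where subsystem $1$ is the qudit on which $U$ acts. Tracing out subsystem $1$ and using $U^\dag U=I$ together with the fact that operators supported on subsystem $1$ may be cycled inside $\Ptr{1}{\cdot}$, I obtain $\Ptr{1}{A}=\Ptr{1}{\rho}-\Ptr{1}{\rho}=0$. By the lemma of De Palma et al.\ stating that $\norm{A}_{W_1}=\tfrac12\norm{A}_1$ whenever some single-qudit partial trace of $A$ vanishes, this gives $\norm{\rho-(U\ot I_m)\rho(U^\dag\ot I_m)}_{W_1}=\tfrac12\norm{\rho-(U\ot I_m)\rho(U^\dag\ot I_m)}_1$. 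Hence $C_{W_1}(U\ot I_m)$ equals the maximal trace distance between $\rho$ and its image under the unitary channel, and since $\rho\mapsto\tfrac12\norm{(\mathrm{id}-U\ot I_m)(\rho)}_1$ is convex, this maximum is attained on a pure state $\ket{\psi}$.

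Next I would evaluate this trace distance. For pure states the standard identity gives $\tfrac12\norm{\proj{\psi}-(U\ot I_m)\proj{\psi}(U^\dag\ot I_m)}_1=\sqrt{1-\abs{\Innerm{\psi}{U\ot I_m}{\psi}}^2}$, so it suffices to understand which overlaps $\Innerm{\psi}{U\ot I_m}{\psi}$ are attainable. Writing $\ket{\psi}$ in its Schmidt decomposition $\ket{\psi}=\sum_i\sqrt{p_i}\ket{e_i}\ket{f_i}$ across the qudit/ancilla cut, orthonormality of the $\ket{f_i}$ collapses the overlap to $\sum_i p_i\Innerm{e_i}{U}{e_i}$, a convex combination of diagonal matrix elements of $U$ taken over an orthonormal set. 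Each such element lies in the numerical range $W(U)=\set{\Innerm{\phi}{U}{\phi}:\iinner{\phi}{\phi}=1}$, which is convex by the Toeplitz--Hausdorff theorem; hence the overlaps attainable with an ancilla form exactly $W(U)$, the same set attainable with no ancilla ($m=0$).

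Since $\sqrt{1-\abs{z}^2}$ is a decreasing function of $\abs{z}$, maximizing the trace distance amounts to minimizing $\abs{\Innerm{\psi}{U\ot I_m}{\psi}}$ over the fixed set $W(U)$, independently of $m$. Therefore $C_{W_1}(U\ot I_m)=\sqrt{1-\mu^2}=C_{W_1}(U)$, where $\mu=\min_{z\in W(U)}\abs{z}$ is the distance from the origin to the numerical range, and taking the supremum over $m$ gives $\AC_{W_1}(U)=C_{W_1}(U)$. I expect the main obstacle to be the middle step: one must show that entanglement with the ancilla cannot enlarge the set of achievable overlaps, and this is precisely where convexity of the numerical range is essential. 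Without it the stabilized (ancilla-assisted) distance could in principle exceed the unstabilized one, as generally happens for non-unitary channels, so the argument genuinely exploits that $U$ is a single unitary acting on one qudit.
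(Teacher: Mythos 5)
Your proof is correct, and its skeleton matches the paper's: reduce the $W_1$ complexity to a trace-distance (overlap) optimization, then show that an ancilla cannot enlarge the set of achievable overlaps $\bra{\psi}(U\ot I_m)\ket{\psi}$ because that set is convex. The difference lies in how that convexity is obtained. The paper diagonalizes $U=\sum_j e^{i\theta_j}\proj{a_j}$ and computes that every overlap, with or without ancilla, has the form $\sum_j f_j e^{i\theta_j}$ with $(f_j)$ a probability vector---i.e., the achievable set is the convex hull of the eigenvalues, manifestly convex and manifestly independent of $m$. You instead use the Schmidt decomposition across the qudit/ancilla cut to write the overlap as a convex combination of diagonal matrix elements $\Innerm{e_i}{U}{e_i}$ over an orthonormal set, and then appeal to the Toeplitz--Hausdorff theorem to conclude these combinations stay inside the numerical range $W(U)$; since for a unitary (normal) operator $W(U)$ equals the convex hull of the spectrum, the two arguments identify the same set and the same value $\sqrt{1-\mu^2}$. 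Your route is marginally more general---Toeplitz--Hausdorff requires no normality, so the ``ancilla does not enlarge the overlap set'' step would work for any fixed operator---while the paper's spectral computation is more elementary and self-contained. A point in your favor: you explicitly justify the identity $\norm{\rho-(U\ot I_m)\rho(U^\dag\ot I_m)}_{W_1}=\tfrac{1}{2}\norm{\rho-(U\ot I_m)\rho(U^\dag\ot I_m)}_1$ in the ancilla case by checking $\Ptr{1}{\rho-(U\ot I_m)\rho(U^\dag\ot I_m)}=0$ and citing the De Palma et al.\ lemma, a step the paper's proof uses implicitly without comment.
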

\begin{proof}
For any 1-qudit unitary and pure state $\ket{\psi}$, we have
\begin{eqnarray}
\norm{\proj{\psi}-U\proj{\psi}U^\dag}_{W_1}
=\frac{1}{2}
\norm{\proj{\psi}-U\proj{\psi}U^\dag}_{1}
=\sqrt{1-|\bra{\psi}U\ket{\psi}|^2}.
\end{eqnarray}
Let $U=\sum_j\exp(i\theta_j)\proj{a_j}$ be the eigenvalue decomposition of $U$. Then,
the state $\ket{\psi}$ can be written as
$\ket{\psi}=\sum_jc_j\ket{a_j}$, from which it follows that
$|\bra{\psi}U\ket{\psi}|^2=|\sum_{j}|c_j|^2\exp(i\theta_j)|^2=|\sum_jd_j\exp(i\theta_j)|^2$, where $d_j = |c_j|^2$.
Hence 
\begin{eqnarray*}
C_{W_1}(U)&=&
\max_{\psi}\norm{\proj{\psi}-U\proj{\psi}}_{W_1}
=\max_{\psi}
\sqrt{1-|\bra{\psi}U\ket{\psi}|^2}\\
&=&\sqrt{1-\min_{\psi}|\bra{\psi}U\ket{\psi}|^2}
=\sqrt{1-\min_{d_j:\sum_jd_j=1}\left|\sum_jd_j\exp(i\theta_j)\right|^2}.
\end{eqnarray*}
Let us consider $U\ot I_m$ for any $m\geq1$. For any ($m+1$)-qudit state
$\ket{\psi}=\sum_j\sqrt{\lambda_j}\ket{\psi_j}\ket{j}$, 
with $\ket{j}$ being a basis on $m$-qudit systems and $\sum_j\lambda_j=1$, we have
\begin{eqnarray*}
\bra{\psi}U\ot I\ket{\psi}
=\sum_j\lambda_j\bra{\psi_j}U\ket{\psi_j},
\end{eqnarray*}
where for each $j$, $\ket{\psi_j}=\sum_k\sqrt{d_{jk}}e^{i\phi_{jk}}\ket{a_k}$, and 
\begin{eqnarray*}
\bra{\psi_j}U\ket{\psi_j}
=\sum_{k}d_{jk}
\exp(i\theta_k).
\end{eqnarray*}
Hence
\begin{eqnarray*}
\bra{\psi}U\ot I\ket{\psi}
=\sum_{j,k}
\lambda_jd_{jk}\exp(i\theta_{k})
=\sum_{k}f_k
\exp(i\theta_k),
\end{eqnarray*}
where $f_k=\sum_j\lambda_jd_{jk}$, and $\sum_kf_k=1$. 
Hence, we have the following expression for $\AC_{W_1}(U)$:
\begin{eqnarray*}
\AC_{W_1}(U)
=\sup_{m}
\max_{\psi}
\norm{\proj{\psi}-U\ot I_m\proj{\psi}U^\dag\ot I_m}_{W_1}
=\sqrt{1-\min_{f_k:\sum_kf_k=1}\left|\sum_kf_k\exp(i\theta_k)\right|^2},
\end{eqnarray*}
which implies that
$\AC_{W_1}(U)=C_{W_1}(U)$.
\end{proof}

\subsection*{Example 1: Depolarizing channel}
Let us consider the single-qudit depolarizing channel, for which we have the following statement.

\begin{claim}
For the 1-qudit depolarizing channel $D_p(\cdot)=p(\cdot)+(1-p)\Tr{\cdot}I/d$, we have
\begin{eqnarray}
C_{W_1}(D_p)
=(1-p)(1-1/d),
\end{eqnarray}
and 
\begin{eqnarray}
\AC_{W_1}(D_p)
=(1-p)(1-1/d^2).
\end{eqnarray}
\end{claim}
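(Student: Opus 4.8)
The plan is to prove the two equalities separately: $C_{W_1}(D_p)$ is the $m=0$ case, while $\AC_{W_1}(D_p)=\sup_m C_{W_1}(D_p\ot I_m)$ genuinely requires an ancilla. For the first equality I would use that on a single qudit every traceless Hermitian $A$ satisfies $\Tr{A}=0$, which is precisely the hypothesis $\Ptr{i}{A}=0$ of the last lemma of the preceding appendix, so $\norm{A}_{W_1}=\frac12\norm{A}_1$. Since $\proj{\psi}-D_p(\proj{\psi})=(1-p)(\proj{\psi}-I/d)$ and $\proj{\psi}-I/d$ has eigenvalues $1-1/d$ once and $-1/d$ with multiplicity $d-1$, its trace norm equals $2(1-1/d)$ independently of $\ket{\psi}$, giving $C_{W_1}(D_p)=(1-p)(1-1/d)$ immediately.

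For $\AC_{W_1}$, I would fix $m$ and a pure state $\ket{\psi}$ on the $(1+m)$-partite system and set $\rho=\Ptr{1}{\proj{\psi}}$. The key identity is $\proj{\psi}-(D_p\ot I_m)(\proj{\psi})=(1-p)A$ with $A=\proj{\psi}-\frac{1}{d}\,I_1\ot\rho$, and one checks $\Ptr{1}{A}=0$, so once more $\norm{A}_{W_1}=\frac12\norm{A}_1$. The structural observation that makes the computation tractable is that $A\preceq\proj{\psi}$ (because $\frac1d I_1\ot\rho\succeq0$); by Weyl monotonicity the second-largest eigenvalue of $A$ is at most that of the rank-one projector $\proj{\psi}$, i.e.\ $0$, so $A$ has at most one positive eigenvalue. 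Since $\Tr{A}=0$ this yields the clean formula $\norm{A}_1=2\lambda_{\max}(A)$, reducing the entire problem to the variational quantity $\lambda_{\max}(A)=\max_{\ket{\phi}}\bigl[\,|\iinner{\phi}{\psi}|^2-\tfrac1d\Innerm{\phi}{I_1\ot\rho}{\phi}\,\bigr]$.

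The heart of the argument, and the step I expect to be hardest, is the sharp bound $\lambda_{\max}(A)\le 1-1/d^2$. I would expand in the Schmidt basis $\ket{\psi}=\sum_k\sqrt{\lambda_k}\ket{e_k}\ket{f_k}$, so that $\rho=\sum_k\lambda_k\proj{f_k}$ and the Schmidt rank $r$ satisfies $r\le d$. Writing $g_k=\iinner{e_kf_k}{\phi}$, Cauchy--Schwarz gives $|\iinner{\phi}{\psi}|^2\le(\sum_k\sqrt{\lambda_k}|g_k|)^2$, while keeping only the diagonal terms gives $\Innerm{\phi}{I_1\ot\rho}{\phi}\ge\sum_k\lambda_k|g_k|^2$. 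Substituting $c_k=\sqrt{\lambda_k}|g_k|\ge0$ then reduces the operator problem to maximizing the scalar expression $(\sum_k c_k)^2-\frac1d\sum_k c_k^2$ over $r\le d$ nonnegative variables subject to $\sum_k c_k\le1$ (a further Cauchy--Schwarz, using $\sum_k\lambda_k=1$ and $\sum_k|g_k|^2\le1$); rescaling to $\sum_k c_k=1$ and minimizing $\sum_k c_k^2$ at the uniform point yields the value $1-\frac1{dr}\le 1-\frac1{d^2}$. Finally I would saturate this with the maximally entangled state, available once $2^m\ge d$: taking $\ket{\phi}=\ket{\psi}$ gives $\lambda_{\max}(A)\ge 1-\frac1d\Tr{\rho^2}=1-1/d^2$. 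The matching bounds give $C_{W_1}(D_p\ot I_m)=(1-p)(1-1/d^2)$ for all $m$ with $2^m\ge d$, and since $C_{W_1}(D_p\ot I_m)$ is nondecreasing in $m$, taking the supremum gives $\AC_{W_1}(D_p)=(1-p)(1-1/d^2)$.
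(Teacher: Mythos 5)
Your proof is correct, and while its skeleton matches the paper's --- the single-qudit case via $\norm{A}_{W_1}=\frac{1}{2}\norm{A}_1$, the reduction to $A=\proj{\psi}-\frac{1}{d}I_1\ot\Ptr{1}{\proj{\psi}}$ with $\Ptr{1}{A}=0$, and the maximally entangled state for the lower bound --- your upper bound on $C_{W_1}(D_p\ot I_m)$ takes a genuinely different route. The paper writes $\frac{I}{d}\ot\Ptr{1}{\proj{\psi}}$ as a Heisenberg--Weyl twirl $\frac{1}{d^2}\sum_{s,t}X_1^sZ_1^t\proj{\psi}Z_1^{-t}X_1^{-s}$, pulls out the $(s,t)=(0,0)$ term to obtain the prefactor $\frac{d^2-1}{d^2}$, and then applies convexity of the $W_1$ norm, bounding each pure-state difference by $\frac{1}{2}\norm{\cdot}_1\le 1$. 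You instead argue spectrally: $A\preceq\proj{\psi}$ forces $\lambda_2(A)\le 0$ by Weyl monotonicity, so $\Tr{A}=0$ gives $\norm{A}_1=2\lambda_{\max}(A)$, and you bound $\lambda_{\max}(A)$ through a Schmidt-basis expansion and an elementary scalar optimization. Your version uses only linear algebra (no covariance/twirling structure) and yields the slightly sharper, input-dependent bound $1-\frac{1}{dr}$ in terms of the Schmidt rank $r$ of $\ket{\psi}$, making explicit that saturation requires maximal Schmidt rank; the paper's twirl is shorter and exploits the fact that the depolarizing channel is a mixture of local unitaries, a trick that transfers to other channels with that structure. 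Two minor points: the inequality $|\iinner{\phi}{\psi}|\le\sum_k\sqrt{\lambda_k}\,|g_k|$ is the triangle inequality rather than Cauchy--Schwarz, and your upper bound actually holds for every $m$ (nothing in the derivation uses $2^m\ge d$), so the closing monotonicity remark is unnecessary: the supremum is already determined by the upper bound for all $m$ together with the lower bound at any single $m$ with $2^m\ge d$.
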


\begin{proof}
For any 1-qudit state, we have
\begin{eqnarray*}
\norm{\proj{\psi}
-D_p(\proj{\psi})}_{W_1}
=\frac{1}{2}
\norm{\proj{\psi}
-D_p(\proj{\psi})}_1
=(1-p)(1-1/d).
\end{eqnarray*}
Hence, we have 
\begin{eqnarray*}
C_{W_1}(D_p)
=(1-p)(1-1/d).
\end{eqnarray*}

Now, let us take a 2-qudit state $\ket{\psi}=\frac{1}{\sqrt{d}}\sum_j\ket{j}\ket{j}$ and the quantum channel $D_p\ot I$. Then,

\begin{eqnarray*}
\norm{\proj{\psi}
-D_p\ot I(\proj{\psi})}_{W_1}
=\frac{1}{2}
\norm{\proj{\psi}
-D_p\ot I (\proj{\psi})}_1
=(1-p)(1-1/d^2).
\end{eqnarray*}
This implies that 
\begin{eqnarray*}
\AC_{W_1}
(D_p)\geq (1-p)(1-1/d^2).
\end{eqnarray*}
Besides, let us prove that 
for any integer $m\geq 0$, 
\begin{eqnarray*}
C_{W_1}(D_p\ot I_m)\leq (1-p)(1-1/d^2).
\end{eqnarray*}
For any ($m+1$)-qudit state $\ket{\psi}$ , we have 
\begin{eqnarray*}
&&\norm{\proj{\psi}-D_p\ot I_{m}(\proj{\psi})}_{W_1}\\
&=&(1-p)
\norm{\proj{\psi}-\frac{I}{d}\ot \Ptr{1}{\proj{\psi}}}_{W_1}\\
&=&(1-p)
\norm{\proj{\psi}-\frac{1}{d_2}\sum_{s,t\in Z_d}X^s_1Z^t_1\proj{\psi}Z^{-t}_1X^{-s}_1}_{W_1}\\
&=&(1-p)
\frac{d^2-1}{d^2}
\norm{\mathbb{E}_{(s,t)\neq (0,0)}(\proj{\psi}-X^s_1Z^t_1\proj{\psi}Z^{-t}_1X^{-s}_1)}_{W_1}\\
&\leq&(1-p)
\frac{d^2-1}{d^2}
\mathbb{E}_{(s,t)\neq (0,0)}
\norm{\proj{\psi}-X^s_1Z^t_1\proj{\psi}Z^{-t}_1X^{-s}_1}_{W_1}\\
&=&(1-p)
\frac{d^2-1}{d^2}\frac{1}{2}\norm{\proj{\psi}-X^s_1Z^t_1\proj{\psi}Z^{-t}_1X^{-s}_1}_{1}\\
&\leq&(1-p)
\frac{d^2-1}{d^2},
\end{eqnarray*}
where the third line comes from the fact that 
\begin{eqnarray*}
\frac{1}{d^2}\sum_{(s,t)\in \mathbb{Z}_d\times \mathbb{Z}_d}X^sZ^t(\cdot)Z^{-t}X^{-s}=\Tr{\cdot}\frac{I}{d},
\end{eqnarray*}
the fifth line comes from the convexity of the quantum $W_1$ distance, and the last line 
comes from the fact that $\Ptr{1}{\proj{\psi}-X^s_1Z^t_1\proj{\psi}Z^{-t}_1X^{-s}_1}=0$, 
and thus $\norm{\proj{\psi}-X^s_1Z^t_1\proj{\psi}Z^{-t}_1X^{-s}_1}_{W_1}= \frac{1}{2}\norm{\proj{\psi}-X^s_1Z^t_1\proj{\psi}Z^{-t}_1X^{-s}_1}_{1}$.
Thus, we have 
\begin{eqnarray*}
\AC_{W_1}(D_p)\leq (1-p)(1-1/d^2).
\end{eqnarray*}

\end{proof}

\begin{cor}
For the $n$-fold tensor product of the depolarizing channel $D^{\ot n}_p$, we have 
\begin{eqnarray}
n(1-p)(1-1/d)\leq C_{W_1}(D^{\ot n}_p)\leq n(1-p)(1-1/d^2)
\end{eqnarray}
and 
\begin{eqnarray}
\AC_{W_1}(D^{\ot n}_p)= n(1-p)(1-1/d^2).
\end{eqnarray}
\end{cor}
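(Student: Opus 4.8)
The plan is to derive every bound purely by assembling the tensorization properties established in Propositions~\ref{prop:bas_1} and~\ref{prop:bas_2}, together with the single-qudit values $C_{W_1}(D_p)=(1-p)(1-1/d)$ and $\AC_{W_1}(D_p)=(1-p)(1-1/d^2)$ computed in the preceding claim. The work is almost entirely bookkeeping: no fresh estimate on the quantum $W_1$ norm is required, and every step is a direct appeal to a result already proved.

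First I would dispose of the correlation-assisted quantity, since it is the cleanest. Additivity of $\AC_{W_1}$ under tensorization (Proposition~\ref{prop:bas_2}, item~\ref{item:additivity_under_tensorization}), applied by a routine induction on $n$, gives
\begin{eqnarray*}
\AC_{W_1}(D^{\ot n}_p)=n\,\AC_{W_1}(D_p)=n(1-p)(1-1/d^2),
\end{eqnarray*}
which is exactly the claimed equality and, conveniently, will also serve as the upper bound for $C_{W_1}$.

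Next I would obtain the two-sided bound on $C_{W_1}(D^{\ot n}_p)$ by sandwiching it between a superadditive lower bound and the $\AC_{W_1}$ value just found. For the lower bound, repeated application of superadditivity under tensorization (Proposition~\ref{prop:bas_1}, item~\ref{item:restatement_tensor_sup}) yields $C_{W_1}(D^{\ot n}_p)\geq n\,C_{W_1}(D_p)=n(1-p)(1-1/d)$. For the upper bound, I would invoke the elementary inequality $C_{W_1}(\Lambda)\leq \AC_{W_1}(\Lambda)$—immediate because the supremum defining $\AC_{W_1}$ includes the term $m=0$—to conclude $C_{W_1}(D^{\ot n}_p)\leq \AC_{W_1}(D^{\ot n}_p)=n(1-p)(1-1/d^2)$. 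Stringing these two estimates together produces the displayed chain of inequalities.

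Since each step reduces to a property proved earlier, there is no genuine obstacle; the only points deserving a moment's care are the two inductions, which are routine, and the decision to bound $C_{W_1}$ from above through $\AC_{W_1}$ rather than by iterating item~\ref{item:restatement_tensor_sub} of Proposition~\ref{prop:bas_1}. That alternative route does work—one would peel off a single tensor factor at a time and apply the estimate $C_{W_1}(D_p\ot I_m)\leq(1-p)(1-1/d^2)$ from the depolarizing-channel claim—but it is more cumbersome, so I would favor the $\AC_{W_1}$ comparison.
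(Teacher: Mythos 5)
Your proof is correct and matches the paper's (implicit) argument: the corollary is stated there without proof precisely because it follows by assembling the single-qudit values from the preceding claim with superadditivity of $C_{W_1}$, additivity of $\AC_{W_1}$, and the elementary bound $C_{W_1}(\Lambda)\leq \AC_{W_1}(\Lambda)$, exactly as you do. Your remark about the alternative upper-bound route via item~\ref{item:restatement_tensor_sub} and the claim's estimate $C_{W_1}(D_p\ot I_m)\leq(1-p)(1-1/d^2)$ is also sound, but not needed.
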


\subsection*{Example 2: \texorpdfstring{$n$}{n}-fold tensor product of Hadamard gates}
Let us consider the $n$-fold tensor product of Hadamard gates, i.e., 
$H^{\ot n}$. The quantum 
Wasserstein complexity of $H^{\ot n}$ is 
\begin{eqnarray}
\label{qW_of_H}
C_{W_1}(H^{\ot n})
=n.
\end{eqnarray}
\begin{proof}[Proof of \eqref{qW_of_H}]
The inequality $C_{W_1}(H^{\ot n})\leq n$ holds because
$\norm{\rho-\sigma}_{W_1}\leq \frac{n}{2}\norm{\rho-\sigma}_1\leq n$, which follows from Lemma~\ref{lem:boun_1norm}. To prove that $C_{W_1}(H^{\ot n})\geq n$, we use the fact that for the Hadamard gate $H$, there exists some pure state $\ket{\phi}$ such that $H\ket{\phi}\bot \ket{\phi}$. Hence, 
\begin{eqnarray*}
C_{W_1}(H^{\ot n})
&\geq& \norm{(H\proj{\phi}H)^{\ot n}-\proj{\phi}^{\ot n}}_{W_1}\\
&=&\sum_i \norm{H\proj{\phi}H-\proj{\phi}}_{W_1}\\
&=&\frac{1}{2}\sum_i \norm{H\proj{\phi}H-\proj{\phi}}_{1}\\
&=&n,
\end{eqnarray*}
where the first line follows from the definition of $C_{W_1}$; the second line follows from the tensorization of the quantum $W_1$ distance, i.e., $\norm{\rho_1\ot\rho_2-\sigma_1\ot\sigma_2}_{W_1}=
\norm{\rho_1-\sigma_1}_{W_1}+\norm{\rho_2-\sigma_2}_{W_1}$ \cite{PalmaIEEE2021}; the third line follows from the fact that $\norm{\rho-\sigma}_{W_1}=\frac{1}{2}\norm{\rho_1-\sigma_1}_1$
 for 1-qudit states $\rho, \sigma$ \cite{PalmaIEEE2021}; and the last line follows from the fact that  $\norm{\rho_1-\sigma_1}_1=1$ as
 $H\ket{\phi}\bot \ket{\phi}$.
\end{proof}

Besides, as $ C_{W_1}(H^{\ot n}\ot I_m)\leq n$, we also have 
 \begin{eqnarray}
\AC_{W_1}(H^{\ot n})=n.
 \end{eqnarray}

\subsection*{Example 3: CNOT gates}
Let us consider the $n$-qubit quantum circuit $\prod_i \CNOT_{i,i+1}$ (See Fig.~\ref{fig0}). Then we have
\begin{eqnarray}
 C_{W_1}\left(\prod_i \CNOT_{i,i+1}\right)=\Theta(n),
\end{eqnarray}
where $f(n)=\Theta(n)$ means that there exist constants $c_1,c_2$ such that $c_1n\leq f(n)\leq c_2 n$ for any $n$.
(1) The upper bound $C_{W_1}\left(\prod_i \CNOT_{i,i+1}\right)\leq n$ holds for the same reason as the one in the above example.
(2) Next, we show that $ C_{W_1}(\prod_i \CNOT_{i,i+1})\geq n/2$. To this end, let us take the input state to be 
$\ket{+}\ot \ket{0}^{n-1}$. Then, the output state is $\prod_i \CNOT_{i,i+1}\ket{+}\ot \ket{0}^{n-1}=\frac{1}{\sqrt{2}}(\ket{0}^n+\ket{1}^n)$, which we denote as $\ket{\Cat_{1/2}}$.
Hence 
\begin{eqnarray*}
C_{W_1}\left(\prod_i \CNOT_{i,i+1}\right)
&\geq&\norm{\proj{+}\ot \proj{0}^{n-1}-\proj{\Cat_{1/2}}}_{W_1}\\
&\geq& \frac{1}{2}\sum_i\norm{\rho_i-\sigma_i}_1=n/2,
\end{eqnarray*}
where $\rho_i$ (or $\sigma_i$) is the 
reduced state of $\proj{+}\ot \proj{0}^{n-1}$ (or $\proj{\Cat_{1/2}}$) on the $i$-th position, and the second inequality comes from the tensorization of the quantum $W_1$ distance.

Hence, we also have 
\begin{eqnarray}
\AC_{W_1}\left(\prod_i \CNOT_{i,i+1}\right)=\Theta(n).
\end{eqnarray}

\begin{figure}[t]
  \center{\includegraphics[width=5cm]  {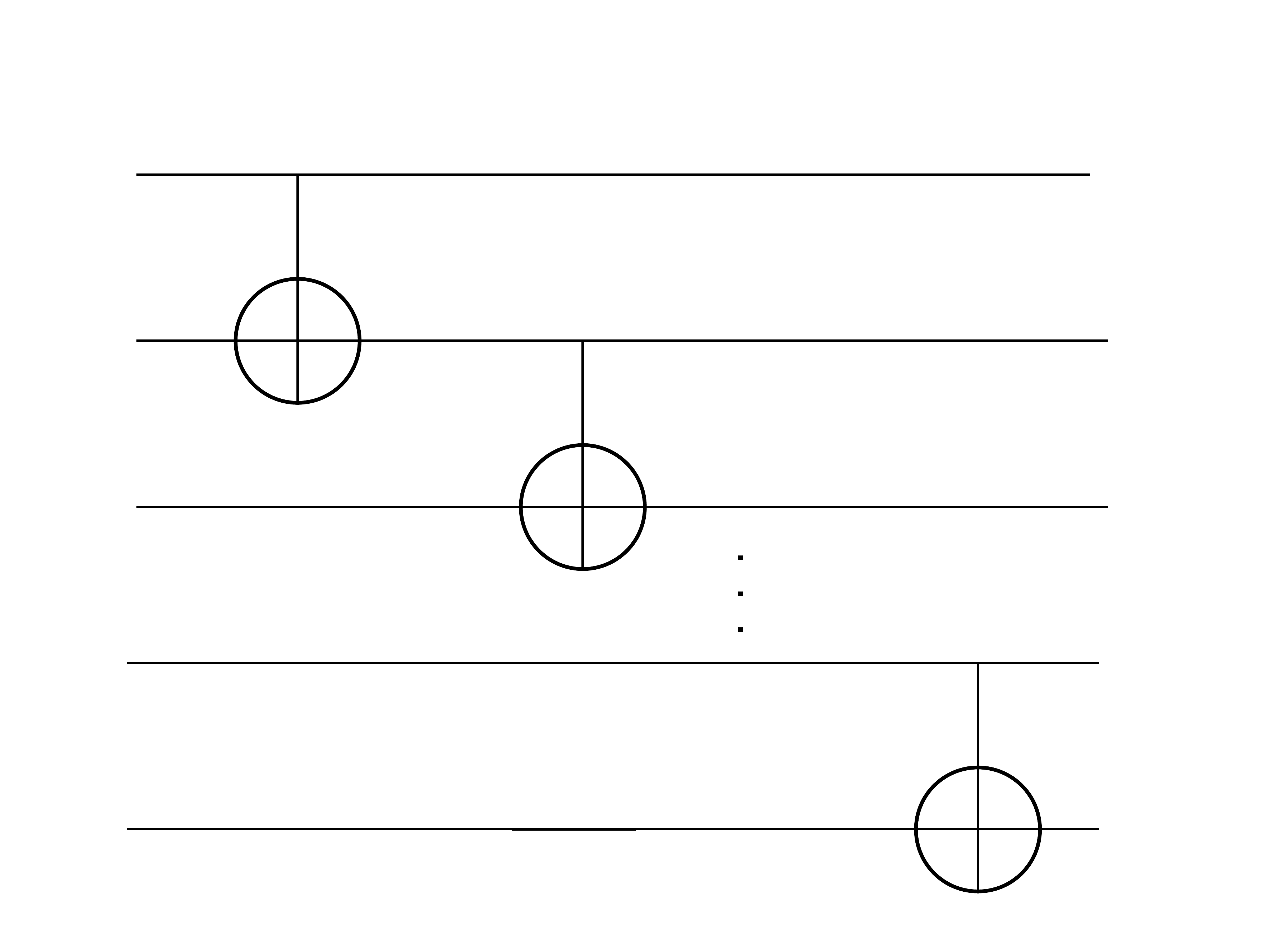}}     
  \caption{
 A circuit diagram for the circuit of cascading CNOT gates $\prod_i\CNOT_{i,i+1}$.}
  \label{fig0}
 \end{figure}

\section{Application in quantum circuit complexity and experimental cost}\label{Appen:appl}

\begin{lem}\label{lem:rate}
Given an $n$-qudit system with a Hamiltonian $H$ acting on a $k$-qudit subsystem, and an $n$-qudit state 
$\ket{\psi}$, for the unitary $U_{\Delta t}=\exp(-iH \Delta t)$, one has
\begin{eqnarray}
\norm{\proj{\psi}-U_{\Delta t}\proj{\psi} U^\dag_{\Delta t}}_{W_1}
\leq 2\sqrt{2}k\frac{d^2-1}{d^2}
\norm{H}_{\infty}|\Delta t|e^{\norm{H}_{\infty}|\Delta t|},
\end{eqnarray}
which implies that 
\begin{eqnarray}
R_{W_1}(\psi,H)
\leq 2\sqrt{2}\norm{H}_{\infty}.
\end{eqnarray}
\end{lem}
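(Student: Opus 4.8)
The plan is to combine the locality of $H$ with the two structural lemmas for the $W_1$ norm stated above. Write $A := \proj{\psi} - U_{\Delta t}\proj{\psi}U^\dag_{\Delta t}$, which is traceless and Hermitian (so $A \in O(\mathcal{H})$), and note that because $H$ is supported on the $k$-qudit subsystem $S$ we may factor $U_{\Delta t} = U_S \otimes I_{S^c}$. The first and most important step is to observe that tracing out $S$ annihilates $A$: since $U_{\Delta t}$ acts as the identity on $S^c$, the reduced states on $S^c$ of $\proj{\psi}$ and of $U_{\Delta t}\proj{\psi}U^\dag_{\Delta t}$ coincide, i.e. $\Ptr{S}{A}=0$ with $\abs{S}=k$. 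This is exactly the hypothesis of Lemma~\ref{lem:boun_1norm}, which then yields $\norm{A}_{W_1}\le k\frac{d^2-1}{d^2}\norm{A}_1$. It thus remains to bound the trace norm $\norm{A}_1$ by a quantity of order $\norm{H}_\infty\abs{\Delta t}$.

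For the trace-norm bound I would exploit that $A$ is a difference of two rank-one projectors, hence has rank at most two, so Cauchy--Schwarz applied to its eigenvalues gives $\norm{A}_1 \le \sqrt{2}\,\norm{A}_2$. To estimate $\norm{A}_2$ I would use the decomposition $A = (\I - U_{\Delta t})\proj{\psi} + U_{\Delta t}\proj{\psi}(\I - U^\dag_{\Delta t})$; each summand is a rank-one operator, whose Hilbert--Schmidt norm equals the product of the norms of its two vectors, giving $\norm{A}_2 \le 2\,\norm{(\I - U_{\Delta t})\ket{\psi}}$. Combining the two steps produces $\norm{A}_1 \le 2\sqrt{2}\,\norm{(\I - U_{\Delta t})\ket{\psi}} \le 2\sqrt{2}\,\norm{\I - U_{\Delta t}}_\infty$.

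Finally, I would control $\norm{\I - U_{\Delta t}}_\infty = \norm{\I - e^{-iH\Delta t}}_\infty$ by Taylor-expanding the exponential and estimating term by term: factoring one power of $H\Delta t$ out of $\sum_{m\ge1}(-iH\Delta t)^m/m!$ and bounding the remaining series by $e^{\norm{H}_\infty\abs{\Delta t}}$ gives $\norm{\I - U_{\Delta t}}_\infty \le \norm{H}_\infty\abs{\Delta t}\,e^{\norm{H}_\infty\abs{\Delta t}}$. Chaining everything yields the claimed inequality $\norm{A}_{W_1} \le 2\sqrt{2}\,k\,\frac{d^2-1}{d^2}\,\norm{H}_\infty\abs{\Delta t}\,e^{\norm{H}_\infty\abs{\Delta t}}$; dividing by $\abs{\Delta t}$ and letting $\Delta t \to 0$ (so that the exponential factor tends to $1$) gives the rate bound $R_{W_1}(\psi,H) \le 2\sqrt{2}\,k\,\norm{H}_\infty$, using $\frac{d^2-1}{d^2}\le 1$.

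The main obstacle is the first step: recognizing and proving $\Ptr{S}{A}=0$, which is precisely what lets the $k$-local structure of $H$ enter through Lemma~\ref{lem:boun_1norm} and produce the factor $k$ rather than a factor $n$. The remaining estimates are routine, though one must be careful to keep the trace-norm bound \emph{linear} in $\abs{\Delta t}$ (rather than, say, of order $\abs{\Delta t}^{1/2}$, which a careless use of $\sqrt{1-\abs{\iinner{\psi}{U_{\Delta t}\psi}}^2}$ could introduce), since otherwise the limit defining $R_{W_1}$ would fail to be finite.
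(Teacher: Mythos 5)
Your proof is correct, and its core estimate takes a genuinely different route from the paper's. Both arguments share the same crucial opening move: since $U_{\Delta t}$ acts trivially outside the $k$-qudit support $S$ of $H$, the operator $A=\proj{\psi}-U_{\Delta t}\proj{\psi}U^\dag_{\Delta t}$ satisfies $\Ptr{S}{A}=0$, so Lemma~\ref{lem:boun_1norm} yields $\norm{A}_{W_1}\leq k\frac{d^2-1}{d^2}\norm{A}_1$. Where you diverge is in getting a trace-norm bound that is \emph{linear} in $\abs{\Delta t}$. The paper writes $\norm{A}_1=2\sqrt{1-\abs{\bra{\psi}U_{\Delta t}\ket{\psi}}^2}$, Taylor-expands $U_{\Delta t}\proj{\psi}U^\dag_{\Delta t}$ in nested commutators $\ad_H^m(\proj{\psi})$, uses the vanishing of the first-order term $\Tr{\proj{\psi}[H,\proj{\psi}]}=0$, and bounds the tail by H\"older's inequality to get $\abs{1-\abs{\bra{\psi}U_{\Delta t}\ket{\psi}}^2}\leq 2\norm{H}_\infty^2\abs{\Delta t}^2 e^{2\norm{H}_\infty\abs{\Delta t}}$, so the square root restores linearity; this is exactly the ``careful'' use of the overlap formula that you warn about. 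You instead exploit the rank-$\leq 2$ structure of $A$ via $\norm{A}_1\leq\sqrt{2}\norm{A}_2$, the rank-one decomposition $A=(\I-U_{\Delta t})\proj{\psi}+U_{\Delta t}\proj{\psi}(\I-U^\dag_{\Delta t})$, and the elementary operator-norm estimate $\norm{\I-e^{-iH\Delta t}}_\infty\leq\norm{H}_\infty\abs{\Delta t}e^{\norm{H}_\infty\abs{\Delta t}}$. Remarkably, both routes land on exactly the same constant $2\sqrt{2}\,k\,\frac{d^2-1}{d^2}$. What each buys: the paper's commutator expansion makes explicit that the overlap decays only at second order in $\Delta t$ (a quantum-speed-limit-flavored fact of independent interest), while your argument is more elementary---the linearity in $\abs{\Delta t}$ is automatic, with no cancellation to notice---and it would even let you drop the exponential factor entirely, since the spectral theorem gives $\norm{\I-e^{-iH\Delta t}}_\infty\leq\norm{H}_\infty\abs{\Delta t}$ outright.

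One bookkeeping remark: what your chain (and, for that matter, the paper's own proof) actually implies for the rate is $R_{W_1}(\psi,H)\leq 2\sqrt{2}\,k\,\frac{d^2-1}{d^2}\norm{H}_\infty\leq 2\sqrt{2}\,k\norm{H}_\infty$, which is precisely Theorem~\ref{thm:WCom_rate} in the main text. The $k$-free bound $R_{W_1}(\psi,H)\leq 2\sqrt{2}\norm{H}_\infty$ printed in the appendix version of the lemma does not follow from the first display for $k\geq 2$ and appears to be a typo, so your conclusion with the factor $k$ is the right one.
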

\begin{proof}
Since $H$ acts on a $k$-qudit subsystem, there exists a subsystem $S$ for which $\Ptr{S}{\proj{\psi}-U_{\Delta t}\proj{\psi}U^\dag_{\Delta t}}=0$.
Hence, by Lemma \ref{lem:boun_1norm}, we have 
\begin{eqnarray*}
\norm{\proj{\psi}-U_{\Delta t}\proj{\psi} U^\dag_{\Delta t}}_{W_1}
\leq  k\frac{d^2-1}{d^2}
\norm{\proj{\psi}-U_{\Delta t}\proj{\psi} U^\dag_{\Delta t}}_{1}
= 2 k\frac{d^2-1}{d^2}
\sqrt{1-|\bra{\psi}U_{\Delta t}\ket{\psi}|^2},
\end{eqnarray*}
Let us define $\ket{\psi_{\Delta t}}=U_{\Delta t}\ket{\psi}$. Then, the 
Taylor expansion of $\proj{\psi_{\Delta t}}$ is 
\begin{eqnarray*}
\proj{\psi_{\Delta t}}
=\proj{\psi}
+i\Delta t[H,\proj{\psi}]
+\frac{(i\Delta t)^2}{2!}[H,[H,\proj{\psi}]]
+ \cdots
\end{eqnarray*}
where $[A,B]=AB-BA$ denotes the commutator between $A$ and $B$.
Then
\begin{eqnarray*}
1-|\bra{\psi}U_{\Delta t}\ket{\psi}|^2
&=&1-\left[\Tr{\proj{\psi}^2}
+i\Delta t\Tr{\proj{\psi}[H,\proj{\psi}]}
+\frac{(i\Delta t)^2}{2!}\Tr{\proj{\psi}[H,[H,\proj{\psi}]]}
+\cdots\right]\\
&=&
-\frac{(i\Delta t)^2}{2!}\Tr{\proj{\psi}[H,[H,\proj{\psi}]]}+\cdots
\end{eqnarray*}
where $\Tr{\proj{\psi}[H,\proj{\psi}]}=0$. 
Let us define $\ad_H$ as $\ad_H(A):=[H, A]$.
Then, the above formula can be rewritten as 
\begin{eqnarray*}
1-|\bra{\psi}U_{\Delta t}\ket{\psi}|^2
=-\sum_{k\geq 2}
\frac{(i\Delta t)^k}{k!}
\Tr{\proj{\psi}\ad^k_H(\proj{\psi})}.
\end{eqnarray*}
For each term $\Tr{\proj{\psi}\ad^k_H(\proj{\psi})}$, by H\"older's inequality, we have 
\begin{eqnarray*}
\left|\Tr{\proj{\psi}\ad^k_H(\proj{\psi})}\right|\leq 
\norm{\ad^k_H(\proj{\psi})}_{\infty}
\leq \norm{H}^k_{\infty}2^k,
\end{eqnarray*}
which implies that
\begin{eqnarray}
\left|1-|\bra{\psi}U_{\Delta t}\ket{\psi}|^2\right|
\leq 
2\norm{H}^2_{\infty}
|\Delta t|^2
\exp(2\norm{H}_{\infty}|\Delta t|).
\end{eqnarray}
Hence, we have 
\begin{eqnarray}
 \norm{\proj{\psi}-U_{\Delta t}\proj{\psi} U^\dag_{\Delta t}}_{W_1}
\leq 2\sqrt{2}k\frac{d^2-1}{d^2}
\norm{H}_{\infty}|\Delta t|e^{\norm{H}_{\infty}|\Delta t|}.
\end{eqnarray}
\end{proof}

\begin{thm}
Given a unitary $U$, the circuit cost $\Cost(U)$ is lower bounded in terms of
$C_{W_1}(U)$ as follows
\begin{eqnarray}
\Cost(U)\geq 4\sqrt{2}C_{W_1}(U).
\end{eqnarray}
\end{thm}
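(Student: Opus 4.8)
The plan is to realize $C_{W_1}(U)$ as the $\norm{\cdot}_{W_1}$-distance between the endpoints of a curve of states generated by the control Hamiltonian, and to bound that distance by the $W_1$-length of the curve, which Lemma~\ref{lem:rate} turns into the control integral defining $\Cost(U)$. First I would fix a pure state $\ket{\psi}$ attaining $C_{W_1}(U)=\norm{\proj{\psi}-U\proj{\psi}U^\dagger}_{W_1}$ and, for any admissible control with $H(s)=\sum_{j=1}^m r_j(s)h_j$ and $U=\mathcal{P}\exp(-i\int_0^1 H(s)\,ds)$, introduce the intermediate unitaries $U_s=\mathcal{P}\exp(-i\int_0^s H(s')\,ds')$ and the state curve $\rho_s=U_s\proj{\psi}U_s^\dagger$, so that $\rho_0=\proj{\psi}$ and $\rho_1=U\proj{\psi}U^\dagger$. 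Since $\norm{\cdot}_{W_1}$ is a metric, the distance between the endpoints is at most the length of the curve,
\begin{eqnarray*}
C_{W_1}(U)=\norm{\rho_0-\rho_1}_{W_1}\leq\int_0^1 R_{W_1}(\psi_s,H(s))\,ds,
\end{eqnarray*}
where $\ket{\psi_s}=U_s\ket{\psi}$ and $R_{W_1}$ is the instantaneous Wasserstein rate.

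It then remains to bound the integrand. The key point is that the rate is subadditive and positively homogeneous in the generator: writing $\rho_{s+\Delta s}$ to first order as a product of conjugations by the factors $\exp(-i r_j(s)h_j\Delta s)$ and applying the triangle inequality for $\norm{\cdot}_{W_1}$ gives
\begin{eqnarray*}
R_{W_1}(\psi_s,H(s))\leq\sum_{j}\abs{r_j(s)}\,R_{W_1}(\psi_s,h_j).
\end{eqnarray*}
Because each $h_j$ acts on only $k=2$ qudits and satisfies $\norm{h_j}_\infty=1$, Lemma~\ref{lem:rate} bounds each term by $R_{W_1}(\psi_s,h_j)\leq 2\sqrt2\cdot 2=4\sqrt2$, so the integrand is at most $4\sqrt2\sum_j\abs{r_j(s)}$. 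Integrating over $s$ and then taking the infimum over all admissible controls converts $\int_0^1\sum_j\abs{r_j(s)}\,ds$ into $\Cost(U)$, producing the desired lower bound on $\Cost(U)$ in terms of $C_{W_1}(U)$ with the constant $4\sqrt2$ supplied by Lemma~\ref{lem:rate}.

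The step I expect to be the main obstacle is turning the ``length $\geq$ distance'' heuristic into a rigorous estimate when the $r_j$ are merely continuous. The clean way is to discretize: for a partition $0=s_0<\dots<s_N=1$ use the triangle inequality $C_{W_1}(U)\leq\sum_l\norm{\rho_{s_{l-1}}-\rho_{s_l}}_{W_1}$, factor each step as $V_l=\mathcal{P}\exp(-i\int_{s_{l-1}}^{s_l}H(s)\,ds)=\prod_j\exp(-i c_{j,l}h_j)+O((\Delta s_l)^2)$ with $c_{j,l}=\int_{s_{l-1}}^{s_l} r_j(s)\,ds$, and apply Lemma~\ref{lem:rate} factor by factor. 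One must then show that the accumulated Trotter and path-ordering remainders, together with the $e^{O(\Delta s_l)}$ prefactors from the lemma, vanish in the limit of fine mesh while the Riemann sums $\sum_l\abs{c_{j,l}}$ converge to $\int_0^1\abs{r_j(s)}\,ds$. Establishing this uniform control of the error terms, and justifying that the instantaneous rate of $H(s)$ is governed by the sum $\sum_j\abs{r_j(s)}$ of its $2$-local pieces rather than by a useless $n$-local estimate, is where the real work lies.
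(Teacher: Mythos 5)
Your proposal follows essentially the same route as the paper's own proof: the paper likewise discretizes the control path (an outer Trotter decomposition $V_N=\prod_t W_t$ plus an inner Lie--Trotter splitting of each $W_t$ into the $2$-local factors $\exp(-\tfrac{i}{N}r_j(t/N)h_j)$), applies the triangle inequality for $\norm{\cdot}_{W_1}$ step by step, bounds each factor via Lemma~\ref{lem:rate} with $k=2$ and $\norm{h_j}_\infty\le 1$ (giving the constant $2\sqrt{2}\cdot 2=4\sqrt{2}$), and identifies the limiting Riemann sum with $\Cost(U)$ --- exactly your ``distance at most length'' argument made rigorous. The one caveat, which you inherit from the paper rather than introduce yourself: what this argument actually establishes is $C_{W_1}(U)\le 4\sqrt{2}\,\Cost(U)$, i.e.\ $\Cost(U)\ge \tfrac{1}{4\sqrt{2}}\,C_{W_1}(U)$, so the constant $4\sqrt{2}$ in the theorem as stated sits on the wrong side of the inequality; neither the paper's final step nor your concluding sentence can legitimately place it there.
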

\begin{proof}
First, let us take 
a Trotter decomposition of $U$ such that for arbitrarily small $\epsilon>0$,
\begin{eqnarray*}
\norm{U-V_N}_{\infty}
\leq \epsilon,
\end{eqnarray*}
where $V_N$ is defined as 
follows
\begin{eqnarray*}
V_N&:=&\prod^N_{t=1}W_t,\\
W_t&:=&\exp\left(-\frac{i}{N}\sum^m_{j=1}r_j\left(\frac{t}{N}\right)h_j\right).
\end{eqnarray*}
and 
\begin{eqnarray*}
W_t&=&\lim_{l\to\infty}W^{(l)}_t,\\
W^{(l)}_t&:=&\left(W^{1/l}_{t,1} \cdots W^{1/l}_{t,m}\right)^l,\\
W_{t,j}&:=&\exp\left(
-\frac{i}{N}
r_{j}\left(
\frac{t}{N}
\right)h_j
\right).
\end{eqnarray*}
where $\norm{h_j}_{\infty} \leq 1$ for any $j$.
Let us define 
$\ket{\psi_t}=
W_t\ket{\psi_{t-1}}$ with $\ket{\psi_0}=\ket{\psi}$,
then by the triangle inequality of the quantum $W_1$ distance,  we have 
\begin{eqnarray}
\norm{\proj{\psi}-V_N\proj{\psi} V^\dag_N}_{W_1}
\leq 
\sum^N_{t=1}
\norm{\proj{\psi_{t-1}}-\proj{\psi_t}}_{W_1}.
\end{eqnarray}
For each $\norm{\proj{\psi_{t-1}}-\proj{\psi_t}}_{W_1}$, we have 
\begin{eqnarray*}
\norm{\proj{\psi_{t-1}}-\proj{\psi_t}}_{W_1}
&=&\norm{\proj{\psi_{t-1}}-W_t\proj{\psi_{t-1}}W^\dag_t}_{W_1}\\
&=&\lim_{l\to \infty}
\norm{\proj{\psi_{t-1}}-W^{(l)}_t\proj{\psi_{t-1}}W^{(l),\dag}_t}_{W_1}\\
&\leq&  \lim_{l\to \infty}l\sum_j\frac{4\sqrt{2}}{Nl}\left|r_{j}\left(
\frac{t}{N}
\right)\right|
\exp\left(\frac{1}{Nl}\left|r_{j}\left(
\frac{t}{N}
\right)\right|\right)\\
&=&\frac{4\sqrt{2}}{N}
\sum_j
\left|r_{j}\left(
\frac{t}{N}
\right)\right|,
\end{eqnarray*}
where the last inequality comes from the triangle inequality of quantum 
$W_1$ distance and Lemma~\ref{lem:rate} by taking $\Delta t=\frac{1}{l}$.
Therefore, 
\begin{eqnarray}
\norm{\proj{\psi}-V_N\proj{\psi} V^\dag_N}_{W_1}
\leq\frac{4\sqrt{2}}{N}
\sum^N_{t=1}
\sum^m_{j=1}
\left|r_{j}\left(
\frac{t}{N}
\right)\right|.
\end{eqnarray}
Since the circuit cost can be expressed as
 \begin{eqnarray*}
 \Cost(U)=
 \lim_{N\to \infty}\frac{1}{N}\sum^N_{t=1}
\sum^m_{j=1}
\left|r_j\left(\frac{t}{N}\right)\right|,
 \end{eqnarray*}
we have 
\begin{eqnarray}
\Cost(U)
\geq 4\sqrt{2}C_{W_1}(U).
\end{eqnarray}
\end{proof}

Note that Lemma~\ref{lem:rate} also holds for ($n+m$)-qudit systems, and hence the statement also holds for 
$\AC_{W_1}$. Therefore, the following corollary follows immediately.
 \begin{cor}
Given a unitary $U$, the circuit cost $\Cost(U)$ is lower bounded in terms of
$\AC_{W_1}(U)$ as follows:
\begin{eqnarray}
\Cost(U)\geq 4\sqrt{2}\AC_{W_1}(U).
\end{eqnarray}
\end{cor}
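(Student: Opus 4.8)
The plan is to deduce this corollary from Theorem~\ref{thm:main1} by a padding (or ``dilation'') argument, exactly as signalled by the remark that Lemma~\ref{lem:rate} survives on $(n+m)$-qudit systems. Unfolding the definition \eqref{def:AC}, it suffices to prove
\begin{eqnarray*}
\Cost(U)\geq 4\sqrt{2}\,C_{W_1}(U\ot I_m)\quad\text{for every } m\in\mathbb{Z}_{\geq 0},
\end{eqnarray*}
since then taking the supremum over $m$ yields $\Cost(U)\geq 4\sqrt{2}\,\AC_{W_1}(U)$. I would split this target inequality into two separate steps and chain them.

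First, I would apply Theorem~\ref{thm:main1} not to $U$ but to the unitary $U\ot I_m$, regarded as a genuine unitary on the enlarged $(n+m)$-qudit system. The only features of the system used in the proof of Theorem~\ref{thm:main1} are that each generator is supported on $2$ qudits and normalized in operator norm, together with Lemma~\ref{lem:rate}; since the statement of Lemma~\ref{lem:rate} holds verbatim on $(n+m)$-qudit systems, Theorem~\ref{thm:main1} applies unchanged and gives $\Cost(U\ot I_m)\geq 4\sqrt{2}\,C_{W_1}(U\ot I_m)$. Second, I would show that the cost is non-increasing under padding, namely $\Cost(U\ot I_m)\leq \Cost(U)$. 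Given any admissible path for $U$, i.e.\ continuous weights $r_j$ with $H(s)=\sum_j r_j(s)h_j$ and $U=\mathcal{P}\exp(-i\int_0^1 H(s)\,ds)$, I would lift each $2$-qudit generator $h_j$ to $h_j\ot I_m$ on the $(n+m)$-qudit system. These lifts are again traceless Hermitian operators supported on $2$ qudits with $\norm{h_j\ot I_m}_\infty=\norm{h_j}_\infty=1$, and the path-ordered exponential of $-i\int_0^1 \sum_j r_j(s)(h_j\ot I_m)\,ds$ equals $U\ot I_m$ with the \emph{same} weights $r_j$. Hence this lifted path is admissible for $U\ot I_m$ and has identical cost integral $\int_0^1\sum_j|r_j(s)|\,ds$, so the infimum defining $\Cost(U\ot I_m)$ cannot exceed $\Cost(U)$. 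Combining the two steps gives $\Cost(U)\geq \Cost(U\ot I_m)\geq 4\sqrt{2}\,C_{W_1}(U\ot I_m)$ for every $m$, and the supremum over $m$ finishes the proof.

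The main obstacle I anticipate is purely bookkeeping in the second step: one must confirm that the generating set with respect to which $\Cost(U\ot I_m)$ is defined on the $(n+m)$-qudit system indeed contains the lifted generators $h_j\ot I_m$ (this is immediate if, as usual, the allowed generators are all normalized $2$-qudit traceless Hermitians), and that tensoring with $I_m$ preserves both the $2$-qudit locality and the operator-norm normalization, so that the locality parameter $k=2$ entering Lemma~\ref{lem:rate} is unaffected. Once the padding identity $\Cost(U\ot I_m)\leq\Cost(U)$ is in hand, everything else is a direct invocation of Theorem~\ref{thm:main1} together with monotonicity of the supremum, so no further analytic estimates are required; alternatively, one could bypass the padding lemma entirely by re-running the Trotterized argument of Theorem~\ref{thm:main1} on the $(n+m)$-qudit system for an arbitrary pure state $\ket{\psi}$, but the reduction above is cleaner and reuses the theorem as a black box.
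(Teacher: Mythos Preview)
Your proposal is correct. The paper's own justification is a one-line observation that Lemma~\ref{lem:rate} holds verbatim on $(n+m)$-qudit systems, so the entire Trotterized estimate in the proof of Theorem~\ref{thm:main1} can be re-run with an arbitrary $(n+m)$-qudit pure state $\ket{\psi}$ while keeping the \emph{same} path $r_j$ for $U$; this is exactly the alternative you sketch in your final sentence. Your primary route---applying Theorem~\ref{thm:main1} to $U\ot I_m$ as a black box and then invoking the padding inequality $\Cost(U\ot I_m)\leq\Cost(U)$---is a clean repackaging of the same idea: the padding inequality is precisely the statement that lifting the generators $h_j\mapsto h_j\ot I_m$ produces an admissible path of identical cost, which is implicitly what the paper uses when it re-runs the proof with the same $r_j$. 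The bookkeeping concern you flag (that the generating set on the enlarged system must contain the lifted $h_j\ot I_m$) is the only place either argument needs care, and you handle it correctly.
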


\begin{thm}
The experimental cost $\mathcal{R}_U$ of an $n$-qubit quantum circuit $U=\prod_lU_l$
is lower bounded in terms of $C_{W_1}(U)$ as follows:
\begin{eqnarray}
\mathcal{R}(U)
\geq \frac{1}{2}C_{W_1}(U).
\end{eqnarray}
\end{thm}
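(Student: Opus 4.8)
The plan is to reduce the claim to a per-gate estimate and control each gate in isolation. Writing $U=\prod_l U_l$ as a composition of channels and applying the subadditivity of $C_{W_1}$ under concatenation (Proposition~\ref{prop:bas_1}) iteratively, I would first obtain $C_{W_1}(U)\le\sum_l C_{W_1}(U_l)$. Since $\mathcal{R}_U=\sum_l k_lE_lT_l$, it then suffices to prove the single-gate bound $C_{W_1}(U_l)\le 2k_lE_lT_l$; summing over $l$ gives $C_{W_1}(U)\le 2\mathcal{R}_U$, equivalently $\mathcal{R}_U\ge\frac12 C_{W_1}(U)$.

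For the single-gate bound, the first move is to replace $\norm{H_l}_\infty$ by the seminorm $E_l$ via a global-phase shift. I would set $\tilde H_l:=H_l-\frac12(h_1+h_{2^{k_l}})I$, whose eigenvalues then lie in $[-E_l,E_l]$ so that $\norm{\tilde H_l}_\infty=E_l$; since $\tilde H_l$ and $H_l$ differ by a multiple of $I$, the gate $\tilde U_l:=\exp(-i\tilde H_lT_l)$ agrees with $U_l$ up to a global phase, so $U_l\proj{\psi}U_l^\dag=\tilde U_l\proj{\psi}\tilde U_l^\dag$. Next, because $U_l$ acts nontrivially only on its $k_l$-qubit support $A$, tracing out $A$ leaves the input and output identical, i.e.\ $\Ptr{A}{\proj{\psi}-U_l\proj{\psi}U_l^\dag}=0$. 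Lemma~\ref{lem:boun_1norm} (with $S=A$) then converts the $W_1$ distance into a trace distance, $\norm{\proj{\psi}-U_l\proj{\psi}U_l^\dag}_{W_1}\le k_l\frac{d^2-1}{d^2}\norm{\proj{\psi}-U_l\proj{\psi}U_l^\dag}_1\le k_l\norm{\proj{\psi}-U_l\proj{\psi}U_l^\dag}_1$.

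The final ingredient is a trace-distance estimate in terms of $E_lT_l$. For pure states $\frac12\norm{\proj{\psi}-\tilde U_l\proj{\psi}\tilde U_l^\dag}_1=\sqrt{1-|\bra{\psi}\tilde U_l\ket{\psi}|^2}$, and the elementary fact $1-|z|^2\le 2-2\re(z)$ for $|z|\le1$ (from $1+|z|^2\ge 2\re(z)$) yields $\sqrt{1-|\bra{\psi}\tilde U_l\ket{\psi}|^2}\le\norm{(I-\tilde U_l)\ket{\psi}}$. Expanding in the eigenbasis of $\tilde H_l$, with $p_j$ the squared amplitudes of $\ket{\psi}$ and eigenvalues $\lambda_j\in[-E_l,E_l]$, gives $\norm{(I-\tilde U_l)\ket{\psi}}^2=\sum_j p_j\,4\sin^2(\lambda_jT_l/2)\le\sum_j p_j(\lambda_jT_l)^2\le (E_lT_l)^2$. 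Hence $\norm{\proj{\psi}-U_l\proj{\psi}U_l^\dag}_1\le 2E_lT_l$, and combining the three steps gives $C_{W_1}(U_l)\le 2k_lE_lT_l$, as required.

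I expect the conceptual crux to be recognizing that the seminorm $E_l$, rather than $\norm{H_l}_\infty$, is the quantity that governs how far a gate can move a state; this is exactly what the global-phase shift to $\tilde H_l$ encodes, and it is what keeps the bound meaningful for gates such as CNOT whose spectrum is not centered. The remaining pieces (subadditivity, locality of each gate through Lemma~\ref{lem:boun_1norm}, and the overlap estimate) are routine, so the main care is in the shift and in the passage from the $W_1$ norm to the trace norm.
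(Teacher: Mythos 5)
Your proof is correct, and it reaches the paper's per-gate bound $\norm{\proj{\psi}-U_l\proj{\psi}U_l^\dag}_{W_1}\leq 2k_lE_lT_l$ by a genuinely different route. The skeleton is shared: both arguments reduce to one gate at a time (your iterated subadditivity from Proposition~\ref{prop:bas_1} is exactly the paper's triangle-inequality chain along the intermediate states $\ket{\psi_l}$), and both invoke Lemma~\ref{lem:boun_1norm} on the $k_l$-qubit support to pass from the $W_1$ norm to the trace norm at a cost of $k_l$. The difference lies in the trace-norm estimate $\norm{\proj{\psi_{l+1}}-\proj{\psi_l}}_1\leq 2E_lT_l$: the paper obtains it by citing a quantum speed limit (Deffner), bounding $\norm{\proj{\psi_{l+1}}-\proj{\psi_l}}_1 \leq \int_0^{T_l}\norm{d\proj{\psi_{l,t}}/dt}_1\,dt \leq \int_0^{T_l} 2\sqrt{\Var_{\psi_{l,t}}[H_l]}\,dt$ and then using $\Var[H_l]\leq E_l^2$; you instead give a self-contained spectral computation, using the pure-state identity $\tfrac12\norm{\proj{\psi}-\proj{\phi}}_1=\sqrt{1-|\iinner{\psi}{\phi}|^2}$, the elementary bound $1-|z|^2\leq 2-2\re(z)=\norm{(I-\tilde U_l)\ket{\psi}}^2$, and $4\sin^2(\lambda T_l/2)\leq(\lambda T_l)^2$. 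Your explicit recentering $\tilde H_l=H_l-\tfrac12(h_1+h_{2^{k_l}})I$ is the elementary counterpart of the shift-invariance of the variance, which is how the seminorm $E_l$ (rather than $\norm{H_l}_\infty$) enters the paper's argument; making it explicit is a nice clarification. What each approach buys: the paper's route connects the theorem to the quantum-speed-limit literature and would survive generalizations where the integrand $2\sqrt{\Var_{\psi_{l,t}}[H_l]}$ is genuinely smaller than $2E_l$ (e.g., time-dependent or state-dependent refinements), while yours removes the external citation entirely and yields the same constant by purely elementary means.
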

\begin{proof}
For any pure $n$-qubit quantum state $\ket{\psi}$, let us define 
\begin{eqnarray}
\ket{\psi_{l+1}}:=U_l\ket{\psi_l}, \quad \mbox{with } 
\ket{\psi_1}=\ket{\psi}. 
\end{eqnarray}
And we also define
\begin{eqnarray}
\ket{\psi_{l,t_l}}:=e^{iH_lt_l}\ket{\psi_l},
\end{eqnarray}
as the intermediate 
state at time $t_l\in [0,T_l]$ while implementing $U_l$
with $\ket{\psi_{l,0}}=\ket{\psi_l}, \psi_{l,T_l}=\ket{\psi_{l+1}}$.
 
Since $U_l$ acts on only $k_l$ qubits, by Lemma~\ref{lem:boun_1norm}, we have
\begin{eqnarray}\label{eq:w_11}
\norm{\proj{\psi_{l+1}}-\proj{\psi_l}}_{W_1}
\leq k_l\norm{\proj{\psi_{l+1}}-\proj{\psi_l}}_1.
\end{eqnarray}
Besides, it was shown in \cite{Deffner_2017} that the quantum speed limit for the
trace norm is bounded by the trace norm of the derivative of the state, that is, 
\begin{eqnarray}
\norm{\proj{\psi_{l+1}}-\proj{\psi_l}}_1
\leq \int^{T_l}_0\norm{d \proj{\psi_{l,t_l}}/dt_l}_1dt_l,
\end{eqnarray}
where ${d \proj{\psi_{l,t_l}}/dt_l}=d e^{iH_lt_l}\proj{\psi_l}e^{-iH_lt_l}/dt_l=i[H_l, \proj{\psi_{l,t_l}}]$.
The trace norm of the derivative of the state under unitary evolution $e^{iH_lt_l}$ is bounded above by the square root of  the variance of the Hamiltonian $H_l$, i.e.,
\begin{eqnarray}
\norm{d \proj{\psi_{l,t_l}}/dt_l}_1
\leq 2\sqrt{\Var_{\psi_{l,t_l}}[H_l]},
\end{eqnarray}
where 
\begin{eqnarray}
\Var_{\psi_{l,t_l}}[H_l]
=\Tr{H^2_l\proj{\psi_{l,t_l}}}
-\Tr{H\proj{\psi_{l,t_l}}}^2.
\end{eqnarray}
Note that $\Var_{\psi_l}[H_l]
\leq E^2_l$ with $E_l=(h_{2^{k_l}}-h_1)/2$, and so we have 
\begin{eqnarray}\label{eq:1_ET}
\norm{\proj{\psi_{l+1}}-\proj{\psi_l}}_1
\leq 2E_lT_l.
\end{eqnarray}
Hence, 
\begin{eqnarray}
\norm{\proj{\psi_{l+1}}-\proj{\psi_l}}_{W_1}
\leq 2k_lE_lT_l.
\end{eqnarray}
Therefore, for the quantum circuit $U=\prod_l U_l$ with any input state $\ket{\psi}$, 
\begin{eqnarray*}
\norm{\proj{\psi}-U\proj{\psi}U^\dag}_{W_1}
&\leq& \sum_l
\norm{\proj{\psi_{l+1}}-\proj{\psi_l}}_{W_1}\\
&\leq& \sum_l k_l
\norm{\proj{\psi_{l+1}}-\proj{\psi_l}}_1\\
&\leq& 2\sum_l k_l E_l T_l=2\mathcal{R}(U),
\end{eqnarray*}
where the first inequality comes from the triangle inequality, the second inequality comes from \eqref{eq:w_11}, and the third inequality comes from \eqref{eq:1_ET}.
\end{proof}

The proof given for the above theorem also works for the case where we have $m$ qubits as ancillas. Hence, the following corollary follows immediately.
\begin{cor}
The experimental cost of an $n$-qubit quantum circuit $\mathcal{R}(U)$ is lower bounded in terms of
$\AC_{W_1}(U)$ as follows: 
\begin{eqnarray}
\mathcal{R}(U)
\geq \frac{1}{2}\AC_{W_1}(U).
\end{eqnarray}
\end{cor}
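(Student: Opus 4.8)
The plan is to obtain this corollary as an immediate consequence of Theorem~\ref{thm:exp_cost}, by unwinding the definition $\AC_{W_1}(U)=\sup_{m}C_{W_1}(U\ot I_m)$ and applying the ancilla-free bound on an enlarged register. The only content beyond Theorem~\ref{thm:exp_cost} is the observation that appending trivially-acting ancilla qubits leaves the experimental cost unchanged, which lets the fixed value $\mathcal{R}(U)$ dominate the whole family $\{C_{W_1}(U\ot I_m)\}_m$.

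First I would fix an integer $m\geq 0$ and view the $n$-qubit circuit $U=\prod_l U_l$ as an $(n+m)$-qubit circuit $U\ot I_m=\prod_l U_l$, where each gate $U_l$ still acts nontrivially on only $k_l$ qubits now sitting inside the larger register, generated by the same $H_l$ and hence carrying the same seminorm $E_l$ and runtime $T_l$. The identity $I_m$ on the ancillas requires no additional gates, so it contributes nothing to the resource sum, giving $\mathcal{R}(U\ot I_m)=\sum_l k_l E_l T_l=\mathcal{R}(U)$. Applying Theorem~\ref{thm:exp_cost} on the $(n+m)$-qubit system to the circuit $U\ot I_m$ then yields $\mathcal{R}(U)=\mathcal{R}(U\ot I_m)\geq \frac{1}{2}C_{W_1}(U\ot I_m)$. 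Since this holds for every $m\in\mathbb{Z}_{\geq 0}$ while the left-hand side is independent of $m$, taking the supremum over $m$ on the right gives $\mathcal{R}(U)\geq \frac{1}{2}\sup_m C_{W_1}(U\ot I_m)=\frac{1}{2}\AC_{W_1}(U)$, as claimed. Equivalently, one can re-run the proof of Theorem~\ref{thm:exp_cost} verbatim for states $\ket{\psi}\in\mathcal{H}\ot(\complex^d)^{\ot m}$: the estimate $\norm{\proj{\psi_{l+1}}-\proj{\psi_l}}_{W_1}\leq k_l\norm{\proj{\psi_{l+1}}-\proj{\psi_l}}_1$ from Lemma~\ref{lem:boun_1norm} and the quantum speed limit $\norm{\proj{\psi_{l+1}}-\proj{\psi_l}}_1\leq 2E_lT_l$ both remain valid on the enlarged Hilbert space, so the same telescoping plus triangle-inequality chain delivers $C_{W_1}(U\ot I_m)\leq 2\mathcal{R}(U)$ for each $m$.

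The main thing to verify, and essentially the only place the argument is not purely formal, is the invariance $\mathcal{R}(U\ot I_m)=\mathcal{R}(U)$ under appending trivially-acting ancillas: this holds because each $U_l$ still touches only $k_l$ qubits with unchanged $E_l,T_l$, while the ancilla identities add no terms to $\sum_l k_l E_l T_l$. Passing from the per-$m$ bounds to the supremum is then immediate, since $\mathcal{R}(U)$ is a single constant dominating $\frac{1}{2}C_{W_1}(U\ot I_m)$ for every $m$; the supremum itself is well-defined because, as noted in the main text, the family $C_{W_1}(U\ot I_m)$ is monotone in $m$ and bounded (indeed $C_{W_1}(U\ot I_m)\leq 2\mathcal{R}(U)$). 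I do not anticipate any genuine obstacle here: the theorem's proof is local in the sense that it only ever uses the number of qubits $k_l$ each gate acts on, which is stable under tensoring with the identity.
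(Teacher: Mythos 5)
Your proposal is correct and matches the paper's argument: the paper's proof of this corollary is exactly the observation that the proof of Theorem~\ref{thm:exp_cost} goes through unchanged when the state lives on the enlarged system with $m$ ancilla qubits, which is your ``re-run verbatim'' formulation, and your black-box version (applying the theorem to $U\ot I_m$ after noting $\mathcal{R}(U\ot I_m)=\mathcal{R}(U)$, then taking the supremum over $m$) is just a cleaner packaging of the same idea.
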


\end{document}